\newtheorem{theorem}{Theorem}[section]
\newtheorem{corollary}[theorem]{Corollary}
\newtheorem{proposition}[theorem]{Proposition}
\theoremstyle{definition}
\newtheorem{definition}[theorem]{Definition}
\newcommand{\citet}[1]{\citeauthor{#1}~\citeyear{#1}}
\newcommand{\citep}{\cite}
\newcommand{\vv}{\mathbf{v}}
\newcommand{\SW}{\text{SW}}
\newcommand{\blue}{\text{\em blue}}
\newcommand{\red}{\text{\em red}}
\newcommand{\ept}{\text{\em empty}}
\newcommand{\true}{\text{\em true}}
\newcommand{\tree}{\text{\em tree}}
\begin{document}

\title{Welfare Guarantees in Schelling Segregation\thanks{This work has been supported by the Deutsche Forschungsgemeinschaft under grant BR 2312/12-1, by the European Research Council (ERC) under grant number 639945 (ACCORD), and by an NUS Start-up Grant.
A preliminary version of the paper appeared in Proceedings
of the 35th AAAI Conference on Artificial Intelligence \citep{BullingerSuVo21}. 
This version contains a new section on computing optimal assignments (\Cref{sec:compopt}), additional results (\ref{thm:PO-welfare-tree}, \ref{cor:PO-price-tree}, \ref{prop:b/r-constant}, \ref{cor:b/r-constant}, \ref{prop:util-egal}, \ref{thm:UVO-welfare-1}, \ref{thm:deg-2-positive-algo}, \ref{prop:egal-zeropos}, \ref{prop:egal-highlow}), as well as all proofs omitted from the conference version.}}

\author{\name Martin Bullinger \email martin.bullinger@in.tum.de \\
       \addr Technische Universit\"{a}t München, Germany
       \AND
       \name Warut Suksompong \email warut@comp.nus.edu.sg \\
       \addr National University of Singapore, Singapore
       \AND
       \name Alexandros A. Voudouris \email alexandros.voudouris@essex.ac.uk \\
       \addr University of Essex, United Kingdom}

\maketitle

\begin{abstract}
Schelling's model is an influential model that reveals how individual perceptions and incentives can lead to residential segregation. Inspired by a recent stream of work, we study welfare guarantees and complexity in this model with respect to several welfare measures. First, we show that while maximizing the social welfare is NP-hard, computing an assignment of agents to the nodes of any topology graph with approximately half of the maximum welfare can be done in polynomial time. We then consider Pareto optimality, introduce two new optimality notions based on it, and establish mostly tight bounds on the worst-case welfare loss for assignments satisfying these notions as well as the complexity of computing such assignments. In addition, we show that for tree topologies, it is possible to decide whether there exists an assignment that gives every agent a positive utility in polynomial time; moreover, when every node in the topology has degree at least $2$, such an assignment always exists and can be found efficiently.
\end{abstract}

\section{Introduction}\label{sec:intro}
Schelling's model was proposed half a century ago to illustrate how individual perceptions and incentives can lead to racial segregation, and has been used to study this phenomenon in residential metropolitan areas in particular \citep{Schelling69,Schelling71}. 
The model is rather simple to describe. 
There are a number of agents, each of whom belongs to one of two predetermined types and occupies a location; in his original work, Schelling assumed that the locations are cells of a rectangular board, which can be represented as a grid graph.
Every agent would like to occupy a node on the graph such that the fraction of other agents of the same type in the neighborhood of that node is at least a predefined tolerance threshold $\tau \in [0,1]$.
If this condition is not met for an agent, then the agent can relocate to a randomly chosen empty node on the grid.
One of the most surprising findings of Schelling is that, starting from a random initial assignment of the agents to the nodes of the grid, the dynamics may converge to segregated assignments even when $\tau \approx 1/3$, contrasting the intuition that segregation should happen only when $\tau \geq 1/2$.

Throughout the years, hundreds of researchers in sociology and economics reconfirmed Schelling's observations and made similar ones for numerous variants of the model using computer simulations---see, e.g.,~\citep{ClarkFo08}. 
More recent work, mainly in computer science, performed rigorous analyses of such variants, some of which are quite close to the original model, and showed that the dynamics according to which the agents relocate converges to assignments in which the agents form large monochromatic regions (that is, subgraphs consisting only of agents of the same type); in addition, this line of work established bounds on the size of these regions.
We refer to the papers~\citep{Pollicott2001dynamics,Young01,Zhang04,Pancs2007proximity,BrandtImKa12,BarmpaliasElLe14,BarmpaliasElLe16,BhaktaMiRa14,ImmorlicaKlLu17} for results of this flavor.

While most of the literature on Schelling's model has focused on properties related to segregation between the two types, segregation itself is only one side of the story, especially when we allow different, possibly more complex location graphs. 
Given that the agents are willing to relocate to be close to other agents of the same type, another natural question is whether the resulting assignments satisfy some sort of {\em efficiency}. 
This has been considered in part by a recent array of papers~\citep{ChauhanLeMo18,EchzellFrLe19,ElkindGaIg19,AgarwalElGa20,BiloBiLe20,ChanIrTh20,KanellopoulosKyVo20}, which have studied Schelling's model from a game-theoretic perspective. 
In particular, instead of randomly relocating, the agents are assumed to be strategic and each of them aims to select a location that maximizes her {\em utility}, defined as the fraction of same-type agents in her neighborhood. 

Besides questions related to the existence and computation of equilibria (i.e., assignments in which no agent has an incentive to relocate in order to increase her utility), the authors of some of the aforementioned papers have also studied the efficiency of assignments in terms of {\em social welfare}, defined as the total utility of the agents.
For this objective, these authors have shown that computing assignments (not necessarily equilibria) maximizing the social welfare is NP-hard under specific assumptions about the graph and the behavior of the agents. 
Furthermore, they established several bounds on the worst-case ratio between the maximum social welfare (achieved by any possible assignment) and the social welfare of the best or worst equilbrium assignment, also known as the {\em price of stability}~\citep{AnchelevichDaKl08} and the {\em price of anarchy}~\citep{KP99}, respectively.
These ratios quantify the welfare that is lost due to the agents aiming to maximize their individual utilities rather than their collective welfare.

Inspired by this active stream of work, we study welfare guarantees and complexity in Schelling's model, not only with respect to the social welfare, but also to different notions of efficiency, such as Pareto optimality and natural variants of it. 

\subsection{Our Contribution}
Our setting consists of $n$ agents partitioned into two types, and a location graph known as the {\em topology}; agents of the same type are ``friends'', and agents of different types are ``enemies''. 
Each agent is assigned to a single node of the graph, and the utility of the agent is defined as the fraction of her friends among the agents in her neighborhood. 

We start by considering the social welfare.
We show that for any topology and any distribution of the agents into types, there always exists an assignment with social welfare at least $n/2-1$, and we provide a polynomial-time algorithm for computing such an assignment.
Since the social welfare never exceeds $n$, our algorithm produces an assignment with at least approximately half of the maximum social welfare. 
We complement this result by showing that maximizing the social welfare is NP-hard, even when the topology is a graph such that the number of nodes is equal to the number of agents. This improves upon previous hardness results of \citet{ElkindGaIg19} and \citet{AgarwalElGa20} whose reductions use instances with ``stubborn agents'' (who are assigned to fixed nodes in advance and cannot move), and either a topology with the number of nodes larger than the number of agents, or at least three types of agents instead of just two. These results are presented in Section~\ref{sec:welfare}.

\begin{table}[tb]
\centering
\begin{tabular}{lp{\widthof{$\Omega(n)$~[Prop. \ref{prop:star-bound}]}}p{\widthof{$\mathcal O(n\sqrt n)$~[Thm.~\ref{thm:PO-welfare}]}}p{\widthof{Welfare guarantee}}}
\toprule
\multirow{2}{*}{Welfare notion $P$} &
      \multicolumn{2}{c}{Price of $P$} & \multirow{2}{*}{Welfare guarantee}\\
    & Lower bound & Upper bound \\
\midrule
Maximum welfare& $1$ & $1$ & $\frac n2-1$~[Thm.~\ref{thm:randomized}]\\
GWO & $\Omega(n)$~[Prop.~\ref{prop:star-bound}] & $ O(n)$~[Thm.~\ref{thm:gwo-welfare}] & $\frac n{n-1}$~[Thm.~\ref{thm:gwo-welfare}] \\
UVO & $\Omega(n)$~[Thm.~\ref{thm:uvo-price}] & $ O(n)$~[Thm.~\ref{thm:uvo-price}] & $1$~[Thm.~\ref{thm:UVO-welfare-1}] \\
\midrule
PO ~ \parbox[c]{\widthof{approx. balanced}}{general\\trees \\ approx. balanced} & \parbox[c]{\widthof{approx. balanced}}{ $\Omega(n)$~[Prop.~\ref{prop:star-bound}] \\ $\Omega(n)$~[Cor.~\ref{cor:PO-price-tree}] \\  $\Omega(n)$~[Cor.~\ref{cor:b/r-constant}] } & \parbox[c]{\widthof{approx. balanced}}{ $ O(n\sqrt n)$~[Thm.~\ref{thm:PO-welfare}] \\ $ O(n)$~[Cor.~\ref{cor:PO-price-tree}]  \\  $ O(n)$~[Cor.~\ref{cor:b/r-constant}] } & \parbox[c]{\widthof{approx. balanced}}{ $\frac 1 {\sqrt n}$~[Thm.~\ref{thm:PO-welfare}] \\ $\frac n{n-1}$~[Thm.~\ref{thm:PO-welfare-tree}] \\  $\Omega(1)$~[Prop.~\ref{prop:b/r-constant}] } \\
\bottomrule
\end{tabular}
\caption{
An overview of our results on the price and welfare guarantees of the optimality notions we consider. The `approximately balanced' results for Pareto optimality hold when the numbers of agents of the two types are within a constant factor of each other (and the number of agents is equal to the number of nodes).
Combining the lower and upper bounds, we obtain a price of $\Theta(n)$ for all welfare notions except for maximum welfare (whose price is trivially $1$) and  PO on general topologies.
}
\label{tab:results}
\end{table}

Even if an assignment does not maximize the social welfare, it can still be optimal in other senses. 
With this in mind, in Section~\ref{sec:notions}, we turn our attention to different notions of optimality. 
In particular, we consider the well-known notion of {\em Pareto optimality} (PO), according to which it should not be possible to improve the utility of an agent without decreasing that of another agent. 
We also introduce two variants of PO, called {\em utility-vector optimality} (UVO) and {\em group-welfare optimality} (GWO), which are particularly appropriate for Schelling's model and may be of interest in other settings as well.
Informally, an assignment is UVO if we cannot improve the sorted utility vector of the agents, and GWO if it is not possible to increase the total utility of one type of agents without decreasing that of the other type. 
We prove several results on these three notions of optimality. 
First, while UVO and GWO imply PO by definition, we show that they are not implied by each other or by PO. Then, for each $P \in \{\text{PO},\text{UVO},\text{GWO}\}$, we establish mostly tight bounds on the {\em price of $P$}, which is an analogue of the price of anarchy: the price of $P$ is defined as the worst-case ratio between the maximum social welfare (among all assignments) and the minimum social welfare among all assignments satisfying $P$.\footnote{Note that an analogue of the price of \emph{stability}, where we consider the worst-case ratio between the maximum social welfare and the \emph{maximum} social welfare among assignments satisfying the optimality notion, is uninteresting: for all of the optimality notions we consider, this price is simply $1$.} 
Several of our results in Sections~\ref{sec:welfare} and \ref{sec:notions} are summarized in Table~\ref{tab:results}.

Next, in \Cref{sec:compopt}, we address the complexity of computing assignments satisfying different optimality notions.
Our NP-hardness reduction for social welfare maximization in \Cref{sec:welfare} also yields a corresponding hardness for GWO.
We then consider \emph{perfect} assignments, in which every agent receives the maximum utility of $1$, and show that deciding whether such an assignment exists is NP-complete.
As consequences, we obtain hardness results for computing a UVO or PO assignment, as well as for maximizing the \emph{egalitarian} welfare (defined as the minimum utility among all agents) and the \emph{Nash} welfare (defined as the product of the agents' utilities).
When perfect assignments do not exist, a reasonable relaxation is to require that every agent receives the maximum utility that she can receive in any assignment for that instance; we call assignments satisfying this requirement \emph{individually optimal}.
While deciding whether an individually optimal assignment exists is again NP-complete in general, when the number of agents is equal to the number of nodes in the topology, we present a characterization of instances admitting such an assignment---this characterization allows us to solve the decision problem in polynomial time for the special case.

Finally, another important measure of efficiency is the number of agents who receive a positive utility in the assignment. 
Even though only requiring the utility to be nonzero seems minimal, there exist simple instances in which not all of the agents can obtain a positive utility simultaneously. 
We show that for trees, it is possible to decide in polynomial time whether there exists an assignment such that all agents receive a positive utility. 
We then observe that it is always possible to guarantee a positive utility for at least half of the agents.
Moreover, when every node in the topology has degree at least $2$, an assignment in which all agents receive a positive utility is guaranteed to exist, and such an assignment can be computed in polynomial time. 
These results are presented in Section~\ref{sec:positive}.  

\subsection{Further Related Work}
As already mentioned, Schelling's model and its variants have been studied extensively from many different perspectives in several disciplines. 
For an overview of early work on the model, we refer the reader to \citep{ImmorlicaKlLu17}. 

Most related to our present work are the papers \citep{ElkindGaIg19,AgarwalElGa20,BiloBiLe20,KanellopoulosKyVo20}, which studied game-theoretic and complexity questions related to the social welfare in Schelling games. In particular, \citet{ElkindGaIg19} considered \emph{jump Schelling games} in which there are $k \geq 2$ types of agents, and the topology is a graph with more nodes than agents so that there are empty nodes to which unhappy agents can jump. They showed that equilibrium assignments do not always exist, proved that computing equilibrium assignments and assignments with social welfare close to $n$ (the maximum possible) is NP-hard, and bounded the price of anarchy and stability for both general and restricted games. 

Later on, \citet{AgarwalElGa20} considered the complement case of \emph{swap Schelling games} in which the number of nodes in the topology is equal to the number of agents; since there are no empty nodes to which the agents can jump, the agents can increase their utility only by swapping positions pairwise. For this setting, the authors showed results similar to those of \citet{ElkindGaIg19}.
\citet{BiloBiLe20} improved some of the price of anarchy bounds of \citet{AgarwalElGa20}, and also studied a variation of the model in which the agents have a restricted view of the topology and can only swap with their neighbors. 
\citet{KanellopoulosKyVo20} investigated the price of anarchy and stability in jump Schelling games, but with a slightly different utility function according to which an agent considers herself as part of her set of neighbors.

Schelling games are closely related to variants of hedonic games, most notably unweighted \emph{fractional hedonic games}~\citep{Aziz19fractional}, in which there is a set of agents and an unweighted graph that indicates friendship relations among them. In such games, the agents split into disjoint coalitions and, similarly to Schelling games, the utility of each agent is equal to the fraction of her friends in her coalition. The main difference between the two models is that, in Schelling games, the agents occupy the nodes of a topology graph, thereby leading to overlapping coalitions. 

The price of Pareto optimality was first considered in the context of fractional hedonic games by \citet{ElkindFaFl20}, and was also implicitly studied by \citet{Bullinger20}. Since Pareto optimality is a fundamental notion in various settings, its price has also been studied in the context of social distance games~\citep{BalliuFlOl17} and fair division \citep{BeiLuMa19}. To the best of our knowledge, this is the first time that Pareto optimality is studied in Schelling's model.

\section{Preliminaries}
Let $N=\{1, \dots, n\}$ be a set of $n \geq 2$ {\em agents}.
The agents are partitioned
into two different {\em types} (or {\em colors}), red and blue.
Denote by $r$ and $b$ the number of red and blue agents, respectively; we have $r+b = n$.
The distribution of agents into types is called \emph{balanced} if $|r-b|\le 1$.
We say that two agents $i, j\in N$ such that $i\neq j$ are {\em friends} if $i$ and $j$ are of the same type; otherwise we say that they are {\em enemies}. 
For each $i\in N$, we denote the set of all friends of agent $i$ by $F(i)$.

A {\em topology} is a simple connected undirected graph $G=(V,E)$, where $V=\{v_1,\dots,v_t\}$.
Each agent in $N$ has to select a node of this graph so that there are no collisions.
A tuple $I=(N,G)$ is called a {\em Schelling instance}.
Given a set of agents $N$  and a topology $G=(V, E)$ with $|V|\ge n$, an {\em assignment} is an $n$-tuple
$\vv=(v(1),\dots, v(n))\in V^n$ such that $v(i)\neq v(j)$ for all $i, j\in N$ with $i\neq j$;
here, $v(i)$ is the node of the topology where agent~$i$ is positioned.
A node $v \in V$ is {\em occupied} by agent $i$ if $v = v(i)$.
For a given assignment~$\vv$ and an agent $i\in N$,
let $N_i(\vv)=\{j \in N: \{v(i),v(j)\} \in E \}$ be the set of neighbors of
agent~$i$. Let $f_i(\vv)=|N_i(\vv) \cap F(i)|$ be the number of neighbors of $i$
in $\vv$ who are her friends. Similarly,
let $e_i(\vv) = |N_i(\vv)| - f_i(\vv)$ be the number
of neighbors of $i$ in $\vv$ who are her enemies.
Following prior work, 
we define the utility $u_i(\vv)$ of an agent $i$ in $\vv$ to be $0$ if $|N_i(\vv)|=0$; otherwise, her utility is defined as the fraction of her friends among the agents in her neighborhood:
$$
u_i(\vv) = \frac{f_i(\vv)}{|N_i(\vv)|} = \frac{f_i(\vv)}{f_i(\vv)+e_i(\vv)}.
$$
The {\em social welfare} of an assignment $\vv$ is defined as the total utility
of all agents:
$$
\SW(\vv) = \sum_{i \in N} u_i(\vv).
$$
Let $\vv^*(I)$ be an assignment that maximizes the social welfare for a given instance $I$; we refer to it as a {\em maximum-welfare} assignment.
Note that for any assignment $\vv$, we have $u_i(\vv) \le 1$, and so $\SW(\vv^*) \le n$.
Denote by $\SW_R(\vv)$ and $\SW_B(\vv)$ the sum of the utilities of the red and blue agents, respectively; we have $\SW_R(\vv) + \SW_B(\vv) = \SW(\vv)$.

\section{Social Welfare}\label{sec:welfare}

The first question we address is whether a high social welfare can always be achieved in any Schelling instance.
Even though it may seem that we can obtain high welfare simply by grouping the agents of each type together, given the possibly complex topology in combination with the distribution of agents into types, it is unclear how this idea can be executed in general or what guarantee it
results in.
Nevertheless, we show that high welfare is indeed always achievable.
Moreover, we provide a tight lower bound on the maximum welfare for each number of agents.

For any positive integer $n$, define
\[
g(n) = 
\begin{cases}
\frac{n(n-2)}{2(n-1)} \qquad &\text{ if } n \text{ is even};\\
\frac{n-1}{2} \qquad &\text{ if } n \text{ is odd}.
\end{cases}
\]
Note that $g(n) \ge n/2 - 1$ for all $n$.
Our approach is to choose an assignment uniformly at random among all possible assignments.
Equivalently, we place agents in the following iterative manner: for an arbitrary unoccupied node, assign a uniformly random agent who is unassigned thus far.
We show that the expected welfare of the assignment resulting from this simple randomized algorithm is at least $g(n)$, which implies the existence of an assignment with this welfare guarantee.

\begin{theorem}
\label{thm:randomized}
For any Schelling instance with $n$ agents, there exists an assignment with social welfare at least $g(n)$.
Moreover, the bound $g(n)$ cannot be improved.
\end{theorem}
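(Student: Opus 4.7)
\medskip

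\noindent\textbf{Proof plan.} The plan is to establish the existential bound via the randomized construction foreshadowed before the theorem, and to exhibit a matching tight instance for the ``cannot be improved'' claim. The main idea for the lower bound is to commit in advance to placing the agents on a connected induced subgraph $H$ of $G$ with exactly $n$ vertices; such an $H$ exists because $G$ is connected and $|V|\ge n$, for instance the set of the first $n$ vertices discovered by a BFS from any starting vertex. Placing the agents uniformly at random (as a bijection) on $V(H)$ is then the natural instantiation of the iterative randomized process described before the theorem. The reason to commit to a connected $H$ is that every vertex of $H$ then has degree at least one in $H$, so every agent has a nonempty neighborhood under every outcome of the random assignment, and no utility is forced to $0$ by the special case $u_i=0$ when $N_i(\vv)=\emptyset$.

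For the expected-welfare calculation I would fix an arbitrary agent $i$ (say red, with the blue case symmetric) and condition on $v(i)=v$ for each $v\in V(H)$. The remaining $n-1$ agents are then placed uniformly on $V(H)\setminus\{v\}$, so the occupants of the $d_H(v)$ neighbors of $v$ form a uniformly random $d_H(v)$-subset of $N\setminus\{i\}$; the number of red friends among them is hypergeometric, giving
\[
E[u_i \mid v(i)=v] \;=\; E\!\left[\tfrac{f_i(\vv)}{d_H(v)}\,\Big|\, v(i)=v\right] \;=\; \tfrac{r-1}{n-1},
\]
independent of $v$. Summing over all agents and using $r+b=n$,
\[
E[\SW(\vv)] \;=\; \frac{r(r-1)+b(b-1)}{n-1} \;=\; \frac{r^2+b^2-n}{n-1}.
\]
A short parity case analysis, minimizing $r^2+b^2$ subject to $r+b=n$ with $r,b\in\mathbb{Z}_{\ge 0}$ (forcing a balanced distribution in both parities), shows this quantity is always at least $g(n)$, so the probabilistic method yields an assignment attaining welfare at least $g(n)$.

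For tightness I would take $G=K_n$ with a balanced type distribution. Since every pair of agents is neighboring under any assignment, each $u_i$ depends only on $i$'s color and the total welfare equals exactly the expectation computed above, which by the previous minimization is $g(n)$; hence no assignment on this instance exceeds $g(n)$. I expect the main obstacle to be the conditional-expectation step: working directly with a uniformly random assignment on all of $V$ is the natural first attempt but breaks down when $|V|>n$, because the number of occupied neighbors of an agent is itself random and may equal zero, destroying the clean identity $E[f_i/X]=(r-1)/(n-1)$. Preselecting a connected subgraph of size $n$ sidesteps this subtlety and makes the hypergeometric calculation go through uniformly across vertices.
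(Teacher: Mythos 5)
Your proposal is correct and follows essentially the same route as the paper: restrict to a connected $n$-vertex subgraph, place agents uniformly at random, compute the expected utility of each agent as $\frac{r-1}{n-1}$ or $\frac{b-1}{n-1}$ by conditioning (your per-agent hypergeometric view and the paper's per-node conditioning on the occupant's color are equivalent), minimize $r^2+b^2$ over balanced splits, and use $K_n$ with a balanced distribution for tightness. No substantive differences.
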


\begin{proof}
First, note that we may assume that the number of agents is equal to the number of nodes by restricting our attention to an arbitrary connected subgraph of $G$ with the desired size. 
For $v_i\in V$, let 
$
N_{v_i} = \{v_j\in V\mid \{v_i,v_j\}\in E\}
$
be the neighborhood of node $v_i$ in $G$, and $n_{v_i} = |N_{v_i}|$ be its size. 
    
Consider an assignment of the agents to the nodes of $G$ chosen uniformly at random.
Let $W$ be a random variable denoting the social welfare of this assignment, $U_i$ a random variable denoting the expected utility of the agent placed at node $v_i$, and $X_i$ a binary random variable describing the color of this agent, where $X_i = 1$ if node $v_i$ is occupied by a blue agent and $X_i = 0$ if it is occupied by a red agent.
By the linearity of expectation and the law of total expectation, we have
\begin{align*}
\mathbb E[W] 
&= \sum_{i=1}^n \mathbb E[U_i] \\
&= \sum_{i=1}^n (\text{Pr}(X_i = 1)\cdot\mathbb E[U_i\mid X_i = 1]  + \text{Pr}(X_i = 0)\cdot\mathbb E[U_i\mid X_i = 0]).
\end{align*}
    
Now, for a fixed $v_i\in V$, it holds that
\begin{align*}
\mathbb E[U_i\mid X_i = 1]
&= \frac 1{n_{v_i}}\sum_{v_j\in N_{v_i}}\mathbb E[X_j\mid X_i = 1] \\
&= \frac 1{n_{v_i}}\sum_{v_j\in N_{v_i}} \text{Pr}(v_j \text{ blue}\mid v_i \text{ blue})= \frac 1{n_{v_i}}\sum_{v_j\in N_{v_i}} \frac{b-1}{n-1} = \frac{b-1}{n-1},
\end{align*}
where the first equality is again due to linearity of expectation.
Similarly, we have $\mathbb E[U_i\mid X_i = 0] = \frac{r-1}{n-1}$.
Hence, 
\begin{align*}\mathbb E[W] &= \sum_{i=1}^n \left(\frac bn\cdot\frac{b-1}{n-1} + \frac rn\cdot\frac{r-1}{n-1}\right) \\
&= b \cdot\frac{b-1}{n-1} + r\cdot \frac{r-1}{n-1} \\
&= \frac 1{n-1}(b(b-1) + (n-b)(n-b-1)) = \frac{1}{n-1}(n^2-n+2b(b-n)).
\end{align*}

Observe that the function $b(b-n)$ is decreasing in the range $b\in[0,n/2]$ and increasing in the range $b\in[n/2,n]$.
This means that for even $n$, we have
\begin{align*}
\mathbb E[W]&\ge \frac{1}{n-1}\left(n^2-n+2\cdot\frac{n}{2}\cdot\left(-\frac{n}{2}\right)\right) = \frac{n(n-2)}{2(n-1)} = g(n).
\end{align*}
For $n$ odd, since $b$ is an integer, it holds that
\begin{align*}
\mathbb E[W]&\ge \frac{1}{n-1}\left(n^2-n+2\cdot\frac{n-1}{2}\cdot\left(-\frac{n+1}{2}\right)\right) = \frac{n-1}{2} = g(n),
\end{align*}
implying that $\mathbb{E}[W] \ge g(n)$ in both cases.
Hence, there exists an assignment with social welfare at least $g(n)$.

Finally, it can be verified that when $G$ is a complete graph with $n$ nodes and the distribution of agents into types is balanced, every assignment has social welfare exactly $g(n)$.
\end{proof}

Next, we derandomize the algorithm in \Cref{thm:randomized} to produce an efficient deterministic algorithm that computes an assignment with welfare at least $g(n)$.
The pseudocode of the algorithm, which shares the notation of \Cref{thm:randomized}, can be found in \Cref{alg:approxWF}.
The main idea is that when we choose an agent to be assigned to an unassigned node, we pick a type such that the expected welfare is maximized, where the expectation is taken with respect to the uniform distribution of the remaining agents to the remaining nodes.

\begin{algorithm}[ht]

  \caption{Assignment with high social welfare}
  \label{alg:approxWF}
  \begin{flushleft}
    \textbf{Input:} Schelling instance $I = (N, G)$ with $G = (V,E)$\\
    \textbf{Output:} Assignment with social welfare at least $g(n)$
  \end{flushleft}

  \begin{algorithmic}[]

\FOR{$i = 1,\dots, n$}
    \IF{there is a unique assignment $\vv$ consistent with $X_1=a_1,\dots,X_{i-1}=a_{i-1}$ (up to permuting agents of the same color)}
    \RETURN $\vv$
    \ENDIF
    \STATE $W_0 = \mathbb E[W\mid X_1 = a_1,\dots, X_{i-1} = a_{i-1}, X_i =0]$
    \STATE $W_1 = \mathbb E[W\mid X_1 = a_1,\dots, X_{i-1} = a_{i-1}, X_i =1]$
        \IF{$W_1\ge W_0$}
        \STATE $a_i = 1$ /*assign a blue agent to $v_i$*/
        \ELSE 
        \STATE $a_i = 0$ /*assign a red agent to $v_i$*/
        \ENDIF
    \ENDFOR
  \RETURN Assignment corresponding to $(a_1,\dots,a_n)$
 \end{algorithmic}
\end{algorithm}

\begin{theorem}
\label{thm:derandomized}
\Cref{alg:approxWF} returns an assignment with social welfare at least $g(n)$ in polynomial time.
\end{theorem}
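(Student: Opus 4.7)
The plan is to establish correctness via the method of conditional expectations and to argue that each conditional expectation can be evaluated in polynomial time.

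For correctness, I would maintain the invariant that after the algorithm has set $X_1 = a_1, \ldots, X_i = a_i$, the conditional expectation $\mathbb{E}[W \mid X_1 = a_1, \ldots, X_i = a_i]$ is at least $g(n)$. The base case $i = 0$ is the unconditional bound $\mathbb{E}[W] \geq g(n)$ from \Cref{thm:randomized}. For the inductive step, the law of total expectation gives
\[
\mathbb{E}[W \mid X_1 = a_1, \ldots, X_{i-1} = a_{i-1}] = p \cdot W_0 + (1-p) \cdot W_1,
\]
where $p = \Pr(X_i = 0 \mid X_1 = a_1, \ldots, X_{i-1} = a_{i-1})$, so $\max(W_0, W_1)$ is at least the left-hand side, and the greedy choice preserves the invariant. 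The loop terminates either after $n$ iterations, in which case $W$ is deterministic and equal to the running conditional expectation, or earlier when one color has been exhausted so that the remaining color assignment is forced; in both cases the welfare of the returned assignment equals the current conditional expectation, and hence is at least $g(n)$.

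For the running time, the only nontrivial step is computing $W_0$ and $W_1$, which reduces to evaluating $\mathbb{E}[W \mid X_1 = a_1, \ldots, X_k = a_k]$ for a given partial coloring of $v_1, \ldots, v_k$. By linearity of expectation this equals $\sum_{j=1}^n \mathbb{E}[U_j \mid X_1 = a_1, \ldots, X_k = a_k]$, and for each $j$ with $n_{v_j} > 0$ a further application of linearity (over the neighbors of $v_j$) expresses $\mathbb{E}[U_j \mid \cdot]$ as
\[
\frac{1}{n_{v_j}} \sum_{v_\ell \in N_{v_j}} \Pr\!\left(v_j, v_\ell \text{ same color} \mid X_1 = a_1, \ldots, X_k = a_k\right).
\]
Each pairwise probability depends only on whether $j$ and $\ell$ are $\leq k$ and on the counts of red/blue agents still to be placed, so a straightforward case analysis evaluates it in constant time. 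Summing over the $O(n^2)$ relevant pairs yields each conditional expectation in polynomial time, and since the outer loop performs $O(n)$ iterations with two such evaluations each, the overall running time is polynomial.

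The main obstacle is the bookkeeping in the pairwise conditional probabilities—namely handling the four cases according to whether each endpoint is already colored or still random—but these calculations are elementary, and the whole argument is a textbook derandomization of the probabilistic existence proof in \Cref{thm:randomized}.
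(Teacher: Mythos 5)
Your proposal is correct and follows essentially the same route as the paper: the method of conditional expectations with the invariant $\mathbb{E}[W\mid X_1=a_1,\dots,X_i=a_i]\ge g(n)$ maintained via the law of total expectation, termination handled by noting the assignment becomes forced, and polynomial-time evaluation of the conditional expectations via linearity. Your phrasing of $\mathbb{E}[U_j\mid\cdot]$ directly as an average of pairwise same-color probabilities is a slightly more compact packaging of the paper's computation (which conditions on $X_j$ first), but it is the same calculation.
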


\begin{proof}
We use the same notation as in the proof of \Cref{thm:randomized}.

First, we prove that the welfare of the returned assignment is at least $g(n)$.
For $i = 0,\dots, n$, denote by $A_i$ the event $X_1 = a_1 \land X_2 = a_2 \land \dots \land X_i = a_i$. In particular, $A_0$ is the entire sample space.
We will show by induction that for each $i$, $\mathbb E[W\mid A_i]\ge \mathbb E[W]$.
The base case $i=0$ holds trivially.
For $i \in \{1,\dots, n\}$, if there is a unique assignment consistent with $X_1 = a_1 \land \dots \land X_{i-1} = a_{i-1}$, then the social welfare of the returned assignment is $\mathbb{E}[W\mid A_{i-1}]\geq \mathbb{E}[W]\geq g(n)$, where the first inequality follows from the induction hypothesis and the second inequality from \Cref{thm:randomized}.
Otherwise, we have
\begin{align*}
\mathbb E[W] &\le 
\mathbb E[W\mid A_{i-1}]\\ 
&=  \text{Pr}(X_i = 0\mid A_{i-1})\cdot \mathbb E[W\mid A_{i-1}\land X_i = 0] + \text{Pr}(X_i = 1\mid A_{i-1}) \cdot\mathbb E[W\mid A_{i-1}\land X_i = 1]\\
&\le  \text{Pr}(X_i = 0\mid A_{i-1}) \cdot\mathbb E[W\mid A_i] + \text{Pr}(X_i = 1\mid A_{i-1}) \cdot\mathbb E[W\mid A_i]
\\ &=  \mathbb E[W\mid A_i],
\end{align*}
where we use the law of total expectation for the first equality and the choice of $a_i$ in the algorithm for the second inequality.
This completes the induction.
Hence, if the algorithm terminates in the $j$th iteration, the welfare of the returned assignment is $\mathbb{E}[W\mid A_j]\geq \mathbb{E}[W]\geq g(n)$.

We next show that the algorithm can be implemented in polynomial time.
To this end, it suffices to show that the quantities $W_0$ and $W_1$ can be computed efficiently for each fixed $i\in\{1,\dots,n\}$.
If there is only one type of agents left after having assigned the first $i$ agents, this is straightforward, so assume that both types of agents still remain.
By the linearity of expectation, for each 
$x\in\{0,1\}$, 
\[
\mathbb{E}[W\mid A_{i-1}\land X_i = x] = \sum_{j=1}^n\mathbb{E}[U_j\mid A_{i-1}\land X_i = x].
\]
By the law of total expectation,
\begin{align*}
\mathbb E[U_j\mid A_{i-1}\land X_i=x] =
\text{Pr}&(X_j = 0\mid A_{i-1}\land X_i = x)\cdot \mathbb E[U_j\mid A_{i-1}\land X_i=x\land  X_j = 0] \\ 
&+ \text{Pr}(X_j = 1\mid A_{i-1}\land X_i = x) \cdot \mathbb E[U_j\mid A_{i-1}\land X_i=x\land X_j = 1],
\end{align*}
where a probability can be $0$ if $v_j$ has already been assigned an agent (i.e., if $j\leq i$). 
When $j > i$, we have
\[
\text{Pr}(X_j = 1\mid A_{i-1}\land X_i = x) = \frac{b - \sum_{k=1}^{i-1}a_k - x}{n-i}.
\]
Also, by the linearity of expectation,
\begin{align*}
\mathbb{E}[U_j&\mid A_{i-1}\land X_i=x\land X_j = 1] =  \frac 1{n_{v_j}} \sum_{v_k\in N_{v_j}}\mathbb E[X_k\mid A_{i-1}\land X_i=x\land X_j = 1].
\end{align*}
Finally, 
\begin{align*}
\mathbb E[X_k&\mid A_{i-1}\land X_i=x\land X_j = 1] = 
\begin{cases}
a_k \qquad &\text{ if } k\le i-1;\\
x \qquad &\text{ if } k = i;\\
\frac{b-\sum_{\ell=1}^{i-1}a_\ell - x-1}{n-i-1} \qquad &\text{ if } k > i.
\end{cases}
\end{align*}
The computations for $X_j = 0$ as well as for $j\le i$ can be done similarly.
\end{proof}

Since the social welfare of any assignment is at most $n$, \Cref{alg:approxWF} always produces an assignment with at least roughly half of the optimal welfare.
This raises the question of whether it is possible to compute a maximum-welfare assignment for any given instance in polynomial time.
Unfortunately, \citet{ElkindGaIg19} proved that maximizing the social welfare is NP-hard.
However, their proof relies on the existence of a ``stubborn agent'', who is assigned to a fixed node in advance and cannot move, and uses a topology with more nodes than agents.\footnote{\citet{AgarwalElGa20} showed that the hardness holds when the numbers of agents and nodes are equal, but still required stubborn agents and moreover assumed at least three types of agents.}
We show that the hardness remains even when both of these assumptions are removed and the topology is a regular graph, i.e., a graph in which all nodes have the same degree.

\begin{theorem}
\label{thm:SW-hardness}
The following problem is NP-complete: Given a Schelling instance and a rational number $s$,  decide whether there exists an assignment with social welfare at least $s$. The hardness holds even for the class of instances where the number of agents is equal to the number of nodes and the topology is a regular graph.
\end{theorem}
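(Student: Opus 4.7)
The plan is to prove the claim in two steps. Membership in NP is immediate: given an assignment one can compute $\SW$ in polynomial time and compare it to $s$. For hardness I would reduce from Minimum Bisection on connected cubic (i.e., $3$-regular) graphs, which is a classical NP-hard problem (Bui, Chaudhuri, Leighton, Sipser).

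The technical heart of the reduction is a structural identity that holds whenever the topology $G$ is $d$-regular. In that case, every agent has exactly $d$ neighbors in any assignment $\vv$, so $f_i(\vv)+e_i(\vv)=d$ for every $i\in N$. Writing $m(\vv)$ for the number of monochromatic edges induced by $\vv$ and $c(\vv)=|E(G)|-m(\vv)$ for the number of bichromatic ones, and using $|E(G)|=nd/2$, one gets
\[
\SW(\vv)=\frac{1}{d}\sum_{i\in N}f_i(\vv)=\frac{2m(\vv)}{d}=n-\frac{2c(\vv)}{d}.
\]
Thus, on a $d$-regular topology, maximizing social welfare is exactly the same as minimizing the number of bichromatic edges between the two color classes, subject to having $r$ red and $b$ blue agents.

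Given an instance of Minimum Bisection consisting of a connected cubic graph $G$ on $n$ (even) vertices together with a bound $k$, the reduction outputs the Schelling instance with topology $G$, color counts $r=b=n/2$, and threshold $s=n-2k/3$. Since $d=3$, the displayed identity yields $\SW(\vv)\ge s$ if and only if $c(\vv)\le k$, so a bisection with at most $k$ cut edges exists precisely when some assignment has welfare at least $s$. The reduction is polynomial; the topology is regular; the agent count equals the node count; only two types are used; and no stubborn agents are needed, matching all the conditions of the statement.

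The main obstacle I foresee is simply the appeal to NP-hardness of Minimum Bisection on cubic graphs as a black box. If a more self-contained reduction is preferred, the same identity lets one reduce directly from Clique on (connected) regular graphs, using the handshake equality $rd=2|E(R)|+|E(R,B)|$ to rewrite $m(\vv)=|E(G)|+2|E(R)|-rd$: maximizing $\SW$ with $r=k$ is then exactly Densest-$k$-Subgraph, and a $k$-clique corresponds to $|E(R)|=\binom{k}{2}$, giving the threshold $s=n+2k(k-1)/d-2k$. Either route works; it is the regular-graph identity that makes the reduction slick and that lets us jettison the stubborn-agent crutch used in prior hardness proofs.
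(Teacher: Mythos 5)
Your proposal is correct, and your primary route differs from the paper's. The paper also establishes membership in NP trivially and relies on exactly the same structural identity for $d$-regular topologies with no empty nodes, namely $\SW(\vv) = \tfrac{2m(\vv)}{d} = n - \tfrac{2c(\vv)}{d}$, but it then reduces from \textsc{Maximum Clique} on regular graphs (NP-hard because \textsc{Independent Set} is NP-hard on $3$-regular graphs and complementation preserves regularity): with $r=k$ red agents, a welfare of $n-2k+4\binom{k}{2}/t$ is achievable if and only if $G$ has a $k$-clique, which is essentially the ``alternative route'' you sketch at the end (your handshake computation $m(\vv) = |E(G)| + 2|E(R)| - rd$ and threshold $s = n - 2k + 2k(k-1)/d$ match the paper's). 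Your main reduction instead goes through \textsc{Minimum Bisection} on connected cubic graphs, which is legitimate provided one accepts the Bui--Chaudhuri--Leighton--Sipser hardness result as a black box; its advantages are that the equivalence ``$\SW(\vv)\ge s$ iff $c(\vv)\le k$'' is immediate in both directions (no strict-inequality argument needed for the no-instance), and that it yields hardness even for the \emph{balanced} type distribution $r=b=n/2$, a slightly stronger restriction than the paper's. The paper's clique route buys a more elementary hardness source (straight out of Garey--Johnson plus complementation) at the cost of using unbalanced type counts. Both reductions satisfy every condition in the statement: regular connected topology, number of agents equal to number of nodes, two types, no stubborn agents.
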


\begin{proof}
The problem belongs to NP since computing the social welfare of a given assignment can be done efficiently.
For the hardness, we reduce from the \textsc{Maximum Clique} problem for regular graphs, i.e., given a regular graph $G$ and an integer $k$, is there a clique of size at least $k$?
Note that this problem is NP-hard: Indeed, the \textsc{Independent Set} problem is NP-hard for regular graphs \citep[pp.~194--195]{GareyJo79}, and a set of vertices forms a clique in a given graph exactly when these vertices form an independent set in the complement graph.\footnote{\label{fn:complement}Given a graph $G = (V,E)$, its \emph{complement graph} is the graph $\overline G = (V, \overline E)$ with $\overline E = \{e \subseteq V \colon |e| = 2, e\notin E\}$, i.e., there is an edge between vertices $v_1,v_2\in V$ in $\overline G$ exactly when there is no edge between them in~$G$.}

Let $(G,k)$ be an instance of \textsc{Maximum Clique}, where $G = (V,E)$ is a $\rho$-regular graph on $n$ vertices, and $k$ an integer.
Define a Schelling instance on topology $G$ with $k$ red and $n-k$ blue agents.
For any assignment $\vv$, the social welfare $\SW(\vv)$ is equal to $n-2\delta(\vv)/\rho$, where $\delta(\vv)$ denotes the number of edges connecting a red agent and a blue agent in $\vv$.
If $G$ has a clique of size $k$, then by assigning all red agents to nodes in this clique, we have $\delta(\vv) = k\rho - 2\binom{k}{2}$; indeed, the sum of degrees of the red agents is $k\rho$, from which we have to subtract twice the number of red-red edges. 
Similarly, if $G$ does not have a clique of size $k$, then for any assignment $\vv$, we have $\delta(\vv) > k\rho - 2\binom{k}{2}$.
Hence, there exists an assignment with welfare at least $n-2k+4\binom{k}{2}/\rho$ if and only if $G$ has a clique of size $k$, so we may set $s = n-2k+4\binom{k}{2}/\rho$ to complete the reduction.
\end{proof}

Note that \Cref{thm:SW-hardness} also yields the hardness of computing a maximum-welfare assignment.
Indeed, any algorithm that computes such an assignment can also be used to decide whether there exists an assignment with a certain social welfare.

\section{Optimality Notions}\label{sec:notions}

Even when an assignment does not achieve maximum social welfare, there can still be other ways in which it is ``optimal''. In this section, we consider some optimality notions and quantify them in relation to social welfare.
We begin with a classic notion, Pareto optimality.

\begin{definition}
An assignment $\vv$ is said to be \emph{Pareto dominated} by an assignment $\vv'$ if $u_i(\vv)\le u_i(\vv')$ for all $i\in N$, with the inequality being strict for at least one agent.
An assignment $\vv$ is \emph{Pareto optimal} (PO) if it is not Pareto dominated by any other assignment.
\end{definition}

Given two vectors $\textbf{w}_1$ and $\textbf{w}_2$ of the same length $k$, we say that $\textbf{w}_1$ \emph{weakly dominates} $\textbf{w}_2$ if for each $i\in\{1,\dots,k\}$, the $i$th element of $\textbf{w}_1$ is at least that of $\textbf{w}_2$.
We say that $\textbf{w}_1$ \emph{strictly dominates} $\textbf{w}_2$ if at least one of the inequalities is strict.

For an assignment $\vv$, denote by $\textbf{u}(\vv)$ the vector of length $n$ consisting of the agents' utilities $u_i(\vv)$, sorted in non-increasing order.
Similarly, denote by $\textbf{u}_R(\vv)$ and $\textbf{u}_B(\vv)$ the corresponding sorted vectors of length $r$ and $b$ for the red and blue agents, respectively.
Note that an assignment $\vv$ is Pareto optimal if and only if there is no other assignment $\vv'$ such that $\textbf{u}_X(\vv')$ weakly dominates $\textbf{u}_X(\vv)$ for $X\in\{R,B\}$ and at least one of the dominations is strict.
Motivated by this observation, we define two new optimality notions appropriate for Schelling instances.

\begin{definition}
An assignment $\vv$ is said to be
\begin{itemize}
\item \emph{group-welfare dominated} by an assignment $\vv'$ if $\SW_X(\vv')\geq \SW_X(\vv)$ for $X\in\{R,B\}$ and at least one of the inequalities is strict;
\item \emph{utility-vector dominated} by an assignment $\vv'$ if $\textbf{u}(\vv')$ strictly dominates $\textbf{u}(\vv)$.
\end{itemize}
An assignment $\vv$ is \emph{group-welfare optimal} (GWO) if it is not group-welfare dominated by any other assignment. Similarly, an assignment $\vv$ is \emph{utility-vector optimal} (UVO) if it is not utility-vector dominated by any other assignment.
\end{definition}

\begin{figure}
\centering
\begin{tikzpicture}
\node (MW) {Maximum-welfare};
\node[above right =of MW] (GWO) {GWO};
\node[below right =of MW] (UVO) {UVO};
\node[below right =of GWO] (PO) {PO};
\draw[->] (MW) -- (GWO);
\draw[->] (MW) -- (UVO);
\draw[->] (GWO) -- (PO);
\draw[->] (UVO) -- (PO);
\draw[->] (MW) -- (PO);
\end{tikzpicture}
\caption{Implication relations among optimality notions.}
\label{fig:implications}
\end{figure}
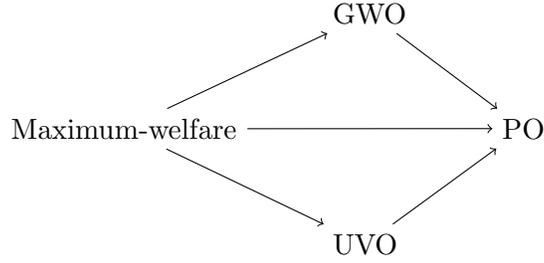

The implication relations in \Cref{fig:implications} follow immediately from the definitions; in particular, both of the new notions lie between welfare maximality and Pareto optimality.
We claim that no other implications exist between these notions.
To establish this claim, it suffices to show that GWO and UVO do not imply each other.

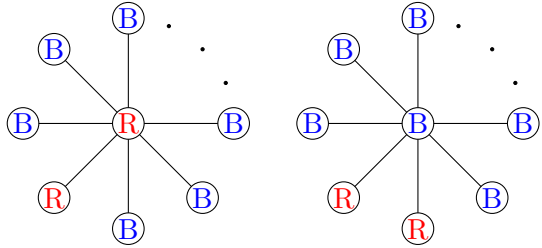
\begin{figure}
\centering
\begin{tikzpicture}[scale=0.7]

\draw  (2,2) -- (2,0);
\draw  (2,2) -- (2,4);
\draw  (2,2) -- (0,2);
\draw  (2,2) -- (4,2);
\draw  (2,2) -- (0.59,3.41);
\draw  (2,2) -- (3.41,0.59);
\draw  (2,2) -- (0.59,0.59);
\draw[fill=white] (2,2) circle [radius = 0.3];
\node[red] at (2,2) {R};
\draw[fill=white] (2,0) circle [radius = 0.3];
\node[blue] at (2,0) {B};
\draw[fill=white] (2,4) circle [radius = 0.3];
\node[blue] at (2,4) {B};
\draw[fill=white] (0,2) circle [radius = 0.3];
\node[blue] at (0,2) {B};
\draw[fill=white] (4,2) circle [radius = 0.3];
\node[blue] at (4,2) {B};
\draw[fill=white] (0.59,3.41) circle [radius = 0.3];
\node[blue] at (0.59,3.41) {B};
\draw[fill=white] (3.41,0.59) circle [radius = 0.3];
\node[blue] at (3.41,0.59) {B};
\draw[fill=white] (0.59,0.59) circle [radius = 0.3];
\node[red] at (0.59,0.59) {R};
\draw[fill=black] (3.41,3.41) circle [radius = 0.03];
\draw[fill=black] (3.85,2.77) circle [radius = 0.03];
\draw[fill=black] (2.77,3.85) circle [radius = 0.03];

\draw  (7.5,2) -- (7.5,0);
\draw  (7.5,2) -- (7.5,4);
\draw  (7.5,2) -- (5.5,2);
\draw  (7.5,2) -- (9.5,2);
\draw  (7.5,2) -- (6.09,3.41);
\draw  (7.5,2) -- (8.91,0.59);
\draw  (7.5,2) -- (6.09,0.59);
\draw[fill=white] (7.5,2) circle [radius = 0.3];
\node[blue] at (7.5,2) {B};
\draw[fill=white] (7.5,0) circle [radius = 0.3];
\node[red] at (7.5,0) {R};
\draw[fill=white] (7.5,4) circle [radius = 0.3];
\node[blue] at (7.5,4) {B};
\draw[fill=white] (5.5,2) circle [radius = 0.3];
\node[blue] at (5.5,2) {B};
\draw[fill=white] (9.5,2) circle [radius = 0.3];
\node[blue] at (9.5,2) {B};
\draw[fill=white] (6.09,3.41) circle [radius = 0.3];
\node[blue] at (6.09,3.41) {B};
\draw[fill=white] (8.91,0.59) circle [radius = 0.3];
\node[blue] at (8.91,0.59) {B};
\draw[fill=white] (6.09,0.59) circle [radius = 0.3];
\node[red] at (6.09,0.59) {R};
\draw[fill=black] (8.91,3.41) circle [radius = 0.03];
\draw[fill=black] (9.35,2.77) circle [radius = 0.03];
\draw[fill=black] (8.27,3.85) circle [radius = 0.03];
\end{tikzpicture}
\caption{Example showing that GWO does not imply UVO.}
\label{fig:star}
\end{figure}

\begin{proposition}
\label{prop:gwo-uvo}
GWO does not imply UVO.
\end{proposition}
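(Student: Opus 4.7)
The plan is to verify that the pair of assignments in \Cref{fig:star} serves as the claimed counterexample. I would take $G$ to be a star with center $c$ and $n-1$ leaves for some $n\ge 5$, and place $2$ red and $n-2$ blue agents on its nodes. Denote the left assignment by $\vv$ (a red at $c$, a red at one leaf, blues at the remaining leaves) and the right assignment by $\vv'$ (a blue at $c$, reds at two leaves, blues at the remaining leaves). A direct utility computation gives, in $\vv$, utility $\tfrac{1}{n-1}$ for the red at $c$, utility $1$ for the red leaf, and utility $0$ for every blue leaf, so $\SW_R(\vv) = 1 + \tfrac{1}{n-1}$ and $\SW_B(\vv) = 0$; in $\vv'$, the blue at $c$ has utility $\tfrac{n-3}{n-1}$, each red leaf has utility $0$, and each of the $n-3$ blue leaves has utility $1$.

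The main substantive step is to show that $\vv$ is GWO, for which I would argue that $\vv$ maximizes $\SW_R$ across all assignments. A red placed at a leaf contributes at most $1$ (it has a single neighbor), while a red placed at $c$ contributes at most $\tfrac{1}{n-1}$ (only one of its $n-1$ neighbors can be the other red), giving $\SW_R\le 1+\tfrac{1}{n-1}$. This bound is attained only when exactly one red sits at $c$ and the other at a leaf, and in every such configuration each blue leaf has the red center as its sole neighbor, so $\SW_B=0$. Consequently, any assignment with $\SW_R\ge\SW_R(\vv)$ must satisfy $\SW_R=\SW_R(\vv)$ and $\SW_B=0=\SW_B(\vv)$, and thus cannot strictly dominate $\vv$ on either coordinate; hence $\vv$ is GWO.

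To rule out UVO, I would write down the sorted utility vectors
\[
\mathbf{u}(\vv) = \bigl(\underbrace{0,\dots,0}_{n-2},\; \tfrac{1}{n-1},\; 1\bigr),\qquad
\mathbf{u}(\vv') = \bigl(0,\;0,\; \tfrac{n-3}{n-1},\;\underbrace{1,\dots,1}_{n-3}\bigr),
\]
and observe that for $n\ge 5$ they agree at the first two and last coordinates, while $\mathbf{u}(\vv')$ is strictly larger than $\mathbf{u}(\vv)$ at every coordinate in between. Hence $\vv'$ utility-vector dominates $\vv$, so $\vv$ is not UVO. The whole argument is essentially a verification; the only place requiring any care is the GWO step, where the point is that maximizing $\SW_R$ simultaneously forces $\SW_B = 0$.
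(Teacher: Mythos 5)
Your proposal is correct and uses exactly the paper's example (the star of \Cref{fig:star} with two red and $n-2$ blue agents): the same two assignments, the same sorted-utility-vector comparison for the UVO failure, and the same underlying reason for GWO (any gain for blue requires a blue center, which zeroes out the reds), which you merely spell out more explicitly via the maximality of $\SW_R$. No gaps.
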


\begin{proof}
Assume that the topology is a star as in \Cref{fig:star}, and there are two red and $n-2$ blue agents, where $n\ge 5$.
The left assignment $\vv$ is GWO, since putting a blue agent at the center as in the right assignment $\vv'$ leaves both red agents with utility $0$.
However, $\vv$ is not UVO, as 
\[
\textbf{u}(\vv) = (1, 1/(n-1), 0,\dots,0)
\] is strictly dominated by
\[
\textbf{u}(\vv') = (1, \dots, 1, (n-3)/(n-1), 0,0). \qedhere
\]
\end{proof}

\begin{figure}
\centering
\begin{tikzpicture}[scale=0.7]

\draw (4.5,10) -- (7.5,10) -- (4.5,8.5) -- (7.5,8.5) -- (4.5,7) -- (7.5,7) -- (4.5,5.5) -- (7.5,5.5) -- (4.5,3) -- (7.5,3) -- (4.5,1.5) -- (7.5,1.5) -- (4.5,3) -- (7.5,7) -- (4.5,8.5) -- (7.5,5.5) -- (4.5,1.5) -- (7.5,7) -- (4.5,10) -- (7.5,8.5) -- (4.5,5.5) -- (7.5,3) -- (4.5,7) -- (7.5,10) -- (4.5,5.5) -- (7.5,1.5) -- (4.5,7) -- (7.5,5.5) -- (4.5,10) -- (7.5,3) -- (4.5,8.5) -- (7.5,1.5) -- (4.5,10);
\draw (7.5,10) -- (4.5,1.5) -- (7.5,8.5) -- (4.5,3) -- (7.5,10);
\draw[fill=white] (4.5,10) circle [radius = 0.3];
\node[red] at (4.5,10) {R};
\draw[fill=white] (4.5,8.5) circle [radius = 0.3];
\node[blue] at (4.5,8.5) {B};
\draw[fill=white] (4.5,7) circle [radius = 0.3];
\node[blue] at (4.5,7) {B};
\draw[fill=white] (4.5,5.5) circle [radius = 0.3];
\node[blue] at (4.5,5.5) {B};
\draw[fill=white] (4.5,3) circle [radius = 0.3];
\node[blue] at (4.5,3) {B};
\draw[fill=white] (4.5,1.5) circle [radius = 0.3];
\node[blue] at (4.5,1.5) {B};
\draw[fill=black] (4.5,4.75) circle [radius = 0.03];
\draw[fill=black] (4.5,4.25) circle [radius = 0.03];
\draw[fill=black] (4.5,3.75) circle [radius = 0.03];
\draw[fill=white] (7.5,10) circle [radius = 0.3];
\node[blue] at (7.5,10) {B};
\draw[fill=white] (7.5,8.5) circle [radius = 0.3];
\node[red] at (7.5,8.5) {R};
\draw[fill=white] (7.5,7) circle [radius = 0.3];
\node[red] at (7.5,7) {R};
\draw[fill=white] (7.5,5.5) circle [radius = 0.3];
\node[red] at (7.5,5.5) {R};
\draw[fill=white] (7.5,3) circle [radius = 0.3];
\node[red] at (7.5,3) {R};
\draw[fill=white] (7.5,1.5) circle [radius = 0.3];
\node[red] at (7.5,1.5) {R};
\draw[fill=black] (7.5,4.75) circle [radius = 0.03];
\draw[fill=black] (7.5,4.25) circle [radius = 0.03];
\draw[fill=black] (7.5,3.75) circle [radius = 0.03];

\draw (9.5,10) -- (12.5,10) -- (9.5,8.5) -- (12.5,8.5) -- (9.5,7) -- (12.5,7) -- (9.5,5.5) -- (12.5,5.5) -- (9.5,3) -- (12.5,3) -- (9.5,1.5) -- (12.5,1.5) -- (9.5,3) -- (12.5,7) -- (9.5,8.5) -- (12.5,5.5) -- (9.5,1.5) -- (12.5,7) -- (9.5,10) -- (12.5,8.5) -- (9.5,5.5) -- (12.5,3) -- (9.5,7) -- (12.5,10) -- (9.5,5.5) -- (12.5,1.5) -- (9.5,7) -- (12.5,5.5) -- (9.5,10) -- (12.5,3) -- (9.5,8.5) -- (12.5,1.5) -- (9.5,10);
\draw (12.5,10) -- (9.5,1.5) -- (12.5,8.5) -- (9.5,3) -- (12.5,10);
\draw[fill=white] (9.5,10) circle [radius = 0.3];
\node[red] at (9.5,10) {R};
\draw[fill=white] (9.5,8.5) circle [radius = 0.3];
\node[blue] at (9.5,8.5) {B};
\draw[fill=white] (9.5,7) circle [radius = 0.3];
\node[red] at (9.5,7) {R};
\draw[fill=white] (9.5,5.5) circle [radius = 0.3];
\node[blue] at (9.5,5.5) {B};
\draw[fill=white] (9.5,3) circle [radius = 0.3];
\node[red] at (9.5,3) {R};
\draw[fill=white] (9.5,1.5) circle [radius = 0.3];
\node[blue] at (9.5,1.5) {B};
\draw[fill=black] (9.5,4.75) circle [radius = 0.03];
\draw[fill=black] (9.5,4.25) circle [radius = 0.03];
\draw[fill=black] (9.5,3.75) circle [radius = 0.03];
\draw[fill=white] (12.5,10) circle [radius = 0.3];
\node[red] at (12.5,10) {R};
\draw[fill=white] (12.5,8.5) circle [radius = 0.3];
\node[blue] at (12.5,8.5) {B};
\draw[fill=white] (12.5,7) circle [radius = 0.3];
\node[red] at (12.5,7) {R};
\draw[fill=white] (12.5,5.5) circle [radius = 0.3];
\node[blue] at (12.5,5.5) {B};
\draw[fill=white] (12.5,3) circle [radius = 0.3];
\node[red] at (12.5,3) {R};
\draw[fill=white] (12.5,1.5) circle [radius = 0.3];
\node[blue] at (12.5,1.5) {B};
\draw[fill=black] (12.5,4.75) circle [radius = 0.03];
\draw[fill=black] (12.5,4.25) circle [radius = 0.03];
\draw[fill=black] (12.5,3.75) circle [radius = 0.03];
\end{tikzpicture}
\caption{Example showing that UVO does not imply GWO. The topology is a complete bipartite graph.}
\label{fig:complete-bipartite}
\end{figure}
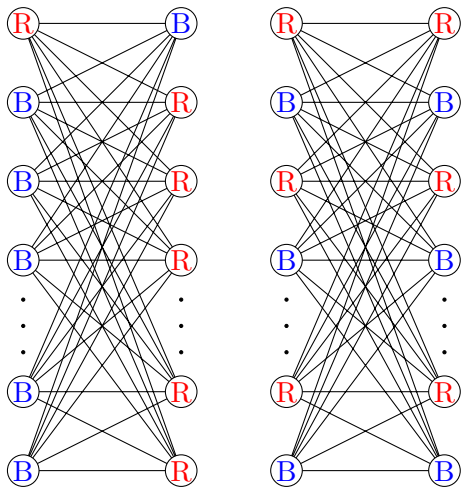

\begin{proposition}
\label{prop:uvo-gwo}
UVO does not imply GWO.
\end{proposition}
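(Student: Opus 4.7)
The plan is to exploit the bipartite topology depicted in Figure~\ref{fig:complete-bipartite}: take the topology $G = K_{n,n}$ for some $n \geq 4$ with $n$ red and $n$ blue agents, and let $\vv$ be the assignment with exactly one red agent on the ``left'' side of the bipartition, as shown on the left in the figure.

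First I would reduce the space of assignments under consideration. Because each side of $K_{n,n}$ is vertex-transitive and same-colored agents are interchangeable, every assignment is characterized up to the relevant symmetries by a single integer $k \in \{0, 1, \ldots, n\}$ counting the red agents on the left side. Every node has exactly $n$ neighbors, all on the opposite side, so a short count shows that the assignment indexed by $k$ has $2k$ agents of utility $(n-k)/n$, has $2(n-k)$ agents of utility $k/n$, and satisfies $\SW_R = \SW_B = 2k(n-k)/n$. Under this parameterization $\vv$ corresponds to $k = 1$.

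The GWO failure is then immediate: comparing $\vv$ with the assignment $\vv'$ corresponding to $k = 2$, both group welfares strictly increase, since $2k(n-k)/n$ is strictly larger at $k = 2$ than at $k = 1$ whenever $n \geq 4$. Hence $\vv'$ group-welfare dominates $\vv$.

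For UVO I would argue that the two largest entries of the sorted utility vector of $\vv$ are both $(n-1)/n$, so any strictly dominating assignment must have its two largest utilities each at least $(n-1)/n$. Since the maximum utility in the assignment indexed by $k'$ is $\max(k', n-k')/n$, this forces $k' \in \{0, 1, n-1, n\}$. The extreme choices $k' \in \{0, n\}$ yield the all-zero utility vector and fail to weakly dominate $\vv$'s smallest entries, while $k' \in \{1, n-1\}$ yield the same sorted utility vector as $\vv$, so no strict domination occurs. Hence $\vv$ is UVO but not GWO.

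The main subtlety, I expect, will be making the reduction to a single parameter $k$ airtight: every assignment, including those that break the left--right symmetry of the topology, must be accounted for. This is justified by noting that both the sorted utility vector and the group welfares are invariant under automorphisms of the instance (i.e., swapping the two sides and permuting agents within each color class), so considering one representative per isomorphism class suffices.
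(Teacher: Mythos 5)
Your proposal is correct and uses essentially the same construction as the paper: the complete bipartite topology of Figure~\ref{fig:complete-bipartite} with a balanced type distribution and the assignment placing a single red agent on one side. The only cosmetic differences are that you witness the group-welfare domination with the assignment having two red agents per side rather than the fully balanced one, and you verify UVO by an explicit enumeration over the parameter $k$ where the paper gives the same argument more tersely.
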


\begin{proof}
Let $n$ be a multiple of $4$. Suppose that the topology is a complete bipartite graph with $n/2$ nodes on each side, and there are $n/2$ red and $n/2$ blue agents (\Cref{fig:complete-bipartite}).
The left assignment $\vv$, which assigns one red agent to the left side and one blue agent to the right side, is UVO.
Indeed, the red agent assigned to the left side receives utility $(n/2-1)/(n/2)$, and any assignment in which an agent receives equal or higher utility must have the same sorted utility vector as $\vv$.
We have
\[
\SW(\vv) = 2\cdot\frac{n/2-1}{n/2} + 2(n/2-1)\cdot\frac{1}{n/2} = 4 - \frac{8}{n},
\]
with each group receiving half of the welfare, i.e., $2 - 4/n$.
On the other hand, in the right assignment $\vv'$, which assigns half of the agents of each color to each side, every agent receives utility $1/2$.
Hence $\SW(\vv') = n/2$, and each group receives a total utility of $n/4$. 
It follows that when $n\ge 8$, $\vv$ is UVO but not GWO.
\end{proof}

In order to quantify the welfare guarantee that each optimality notion provides, we define the price of a notion as follows.

\begin{definition}
Given a property $P$ of assignments and a Schelling instance, the \emph{price of $P$} for that instance is defined as the ratio between the maximum social welfare (of any assignment) and the minimum social welfare of an assignment satisfying $P$:
\[
\text{Price of $P$ for instance $I$} = \frac{\SW(\vv^*(I))}{\min_{\vv\in P(I)}\SW(\vv)},
\]
where $P(I)$ is the set of all assignments satisfying $P$ in instance $I$.\footnote{We interpret the ratio $\frac{0}{0}$ in this context to be equal to $1$.}
The \emph{price of $P$} for a class of instances is then defined as the supremum price of $P$ over all instances in that class.
\end{definition}

For $P\in\{$PO, GWO, UVO$\}$, we have $\vv^*(I)\in P(I)$, so the price of $P$ is always well-defined and at least $1$.
Note also that $\max_{\vv\in P(I)}\SW(\vv) = \SW(\vv^*(I))$.

In \Cref{fig:star}, the left assignment is GWO and PO and has social welfare $n/(n-1)$, whereas the maximum-welfare assignment on the right has social welfare $n(n-3)/(n-1)$.
We therefore have the following bound (for $n\le 4$, the bound holds trivially).

\begin{proposition}
\label{prop:star-bound}
For every $n$, both the price of GWO and the price of PO are at least $n-3$.
\end{proposition}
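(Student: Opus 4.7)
The plan is to use the star instance pictured in Figure~\ref{fig:star}: take the topology to be a star with one center and $n-1$ leaves, and set the number of agents to be $2$ red and $n-2$ blue. The case $n\le 4$ is trivial since the price of any $P$ is always at least $1$, so I assume $n\ge 5$. The claim reduces to exhibiting one single GWO-and-PO assignment whose welfare is $n-3$ times smaller than the welfare-optimal one.

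First I would fix the witness to be the ``left'' assignment $\vv$ of Figure~\ref{fig:star}: place a red agent at the center and the second red agent at an (arbitrary) leaf; fill the remaining leaves with blue agents. A direct computation gives $u_{\text{center}}(\vv)=1/(n-1)$, $u_{\text{red leaf}}(\vv)=1$, and $u_i(\vv)=0$ for every blue leaf, so $\SW(\vv)=n/(n-1)$, with $\SW_R(\vv)=n/(n-1)$ and $\SW_B(\vv)=0$. Then I would compute the welfare of the ``right'' assignment $\vv'$ in which the center is blue and both red agents sit at leaves: the center blue has utility $(n-3)/(n-1)$, the two red leaves have utility $0$, and each of the $n-3$ blue leaves has utility $1$, giving $\SW(\vv')=n(n-3)/(n-1)$. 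In particular $\SW(\vv^*(I))\ge n(n-3)/(n-1)$, so the ratio $\SW(\vv^*(I))/\SW(\vv)$ is at least $n-3$.

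The main content is then checking that $\vv$ is both GWO and PO. Because the topology is a star, any assignment is characterized (up to symmetry of the leaves) by the color of the center and the positions of the two reds; so a short case analysis suffices. For GWO: the red welfare in $\vv$ equals $n/(n-1)$, and the only way for any red agent to obtain positive utility in this instance is for the two reds to occupy the center together with some leaf (since otherwise the two reds sit at non-adjacent leaves and both get $0$). Hence among all assignments, $n/(n-1)$ is the maximum possible value of $\SW_R$, achieved exactly by configurations equivalent to $\vv$, and every such configuration has $\SW_B=0$. Any potential GWO-dominator would therefore have to match both group welfares exactly, i.e.\ coincide with $\vv$ up to relabelling, contradicting strict domination. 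For PO: the leaf red already has utility $1$, the maximum possible, and this utility can be preserved only if the center is red (the leaf's unique neighbor must be a friend). Conditioning on a red center, the center's utility is forced to be $1/(n-1)$ and every blue leaf's utility is $0$, so no Pareto improvement over $\vv$ exists.

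The bookkeeping step I expect to be the only subtlety is the GWO argument, because one has to rule out the possibility that a different configuration simultaneously matches or exceeds both group welfares; the observation that $n/(n-1)$ is the instance-wide maximum of $\SW_R$, and is attained only by $\vv$ up to symmetry, closes this gap. Combining the three steps yields the price of GWO and the price of PO are each at least $n-3$, as required.
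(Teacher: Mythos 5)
Your proposal is correct and follows essentially the same route as the paper: the star instance of Figure~\ref{fig:star} with two red agents, the left assignment (red center plus one red leaf) as the GWO-and-PO witness with welfare $n/(n-1)$, and the right assignment with welfare $n(n-3)/(n-1)$, giving the ratio $n-3$. Your explicit verification that $n/(n-1)$ is the instance-wide maximum of $\SW_R$ (attained only with a red center, forcing $\SW_B=0$) is just a more detailed write-up of the paper's one-line GWO argument, and the PO check matches as well.
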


The following result shows that the welfare of a UVO assignment can also be a linear factor away from the maximum welfare, but not more.

\begin{theorem}
\label{thm:uvo-price}
The price of UVO is $\Theta(n)$.
\end{theorem}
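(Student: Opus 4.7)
My plan is to prove matching $\Omega(n)$ and $O(n)$ bounds.

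For the lower bound, the complete bipartite example from \Cref{fig:complete-bipartite} and \Cref{prop:uvo-gwo} already suffices: in $K_{m,m}$ with $m$ red and $m$ blue agents, the ``nearly-segregated'' assignment (one red on the left and one blue on the right, with all other agents swapped) is UVO with $\SW = 4 - 8/n$, whereas the balanced-per-side assignment attains $\SW = n/2$, giving a ratio of $\Theta(n)$.

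For the upper bound, I take any UVO $\vv$ and any maximum-welfare assignment $\vv^*$ and aim to show $\SW(\vv^*)/\SW(\vv) = O(n)$. If $\textbf{u}(\vv) = \textbf{u}(\vv^*)$ as sorted vectors, the claim is immediate. Otherwise, UVO prevents $\textbf{u}(\vv^*)$ from strictly dominating $\textbf{u}(\vv)$, so there exists an index $j$ with $u_{(j)}(\vv) > u_{(j)}(\vv^*)$. Two ingredients then drive the bound: (i) since utilities are fractions with denominator at most $n-1$, the strict inequality forces $u_{(j)}(\vv) \geq 1/(n-1)$, and by monotonicity of the sorted vector $\SW(\vv) \geq (n-j+1)\,u_{(j)}(\vv)$; (ii) bounding the first $j-1$ sorted values of $\vv^*$ by $u_{(j)}(\vv^*)$ and the remaining $n-j+1$ values by $1$ gives $\SW(\vv^*) \leq (j-1)\,u_{(j)}(\vv^*) + (n-j+1)$.

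The main obstacle is then handling the two regimes of $u_{(j)}(\vv^*)$. If $u_{(j)}(\vv^*) = 0$, sortedness forces $\SW(\vv^*) \leq n - j$, and combined with $\SW(\vv) \geq (n-j+1)/(n-1)$ this gives a ratio of at most $n-1$. If $u_{(j)}(\vv^*) \geq 1/(n-1)$, direct substitution yields
\[
\frac{\SW(\vv^*)}{\SW(\vv)} \leq \frac{j-1}{n-j+1} + \frac{1}{u_{(j)}(\vv^*)} \leq 2(n-1),
\]
which is again $O(n)$. The subtle point is that a direct dominance-style comparison between $\vv$ and $\vv^*$ does not yield the bound; the leverage comes from combining the single-coordinate improvement guaranteed by UVO with the discrete $1/(n-1)$ floor on any positive utility.
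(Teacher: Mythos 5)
Your proof is correct, and while your lower bound is exactly the paper's (the complete bipartite instance of \Cref{fig:complete-bipartite}), your upper bound takes a genuinely different route. The paper proves an absolute welfare guarantee: every UVO assignment has social welfare at least $1/2$ (improved to $1$ in \Cref{app:additional}), established by explicitly constructing utility-vector-dominating assignments via swaps of zero-utility agents, together with a separate case analysis for topologies with more nodes than agents; dividing $\SW(\vv^*)\le n$ by this constant gives the $O(n)$ price. You instead never construct an alternative assignment: you apply the UVO condition only to the single pair $(\vv,\vv^*)$, extract an index $j$ of the sorted vectors where $u_{(j)}(\vv)>u_{(j)}(\vv^*)$, and combine the resulting lower bound $\SW(\vv)\ge (n-j+1)u_{(j)}(\vv)\ge (n-j+1)/(n-1)$ with the matching upper bound $\SW(\vv^*)\le (j-1)u_{(j)}(\vv^*)+(n-j+1)$; the two regimes for $u_{(j)}(\vv^*)$ check out and yield a price of at most $2(n-1)$. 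Your key observation---that a single-coordinate win in the \emph{sorted} vector certifies $n-j+1$ agents with utility at least $1/(n-1)$---is exactly the leverage that distinguishes UVO from PO here (for PO one only gets a single positive agent, which is why the paper must work harder in \Cref{thm:PO-welfare}). What each approach buys: the paper's argument gives the standalone structural fact that UVO welfare is bounded below by an absolute constant independent of $\SW(\vv^*)$, which yours does not, whereas yours is shorter, sidesteps the equal-nodes-versus-extra-nodes case split entirely, and derives the price bound purely from the definition of utility-vector dominance plus the $1/(n-1)$ granularity of utilities.
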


\begin{proof}
We prove the lower and the upper bound separately.

\medskip

\noindent
\underline{Lower bound}:
Consider the topology in \Cref{fig:complete-bipartite}.
As in the proof of \Cref{prop:uvo-gwo}, the left assignment $\vv$ is UVO and has social welfare $4-8/n$.
On the other hand, the right assignment $\vv'$ has social welfare $n/2$, meaning that the ratio $\SW(\vv')/\SW(\vv)$ is greater than $n/8$.

\medskip

\noindent
\underline{Upper bound}:
We claim that if $n\ge 3$, any UVO assignment has social welfare at least\footnote{In \Cref{thm:UVO-welfare-1}, we improve this bound to $1$ via a longer proof.} $1/2$; since the maximum social welfare is at most $n$, this yields the desired bound.

Assume first that the number of agents is equal to the number of nodes.
Let $\vv$ be a UVO assignment.
If there is a red agent and a blue agent both receiving utility $0$, then since no node is empty and $n\ge 3$, swapping them yields an improvement with respect to the utility vector.
So we may assume that all agents of one type, say blue, receive a positive utility.
If at least $n/2$ agents receive a positive utility, then $\SW(\vv)\ge n/(2n-2) > n/(2n) = 1/2$.
Assume therefore that more than $n/2$ agents receive utility $0$; these agents must all be red.
Swap $b$ of these red agents receiving utility $0$ with all $b$ blue agents to obtain an assignment~$\vv'$.
Notice that the utility in $\vv'$ of each of these $b$ red agents is at least as high as the utility of the blue agent in $\vv$ with whom she was swapped, while all blue agents receive utility $0$ in $\vv'$.
In addition, every other (red) agent is not worse off, and at least one of them is better off (in particular, one who receives utility $0$ in $\vv$, which must exist since $n/2 > b$).
Hence $\vv$ is utility-vector dominated by $\vv'$, a contradiction.

Now, assume that the number of agents is less than the number of nodes.
Since $n\ge 3$, any UVO assignment $\vv$ must have $\SW(\vv) > 0$, so there exists a connected component (of the topology restricted to the nodes occupied according to $\vv$) with a positive social welfare.
Let $n'$ be the size of this component.
If $n'=2$, then $\SW(\vv)\ge 2$.
Else, the assignment restricted to this component is also UVO, and by our earlier arguments has social welfare at least $1/2$.
\end{proof}

Next, we show that the price of GWO is also $\Theta(n)$. The lower bound follows from Proposition~\ref{prop:star-bound}, while the upper bound follows from the fact that the social welfare never exceeds $n$ along with the following theorem, which establishes a lower bound of (at least) $1$ on the social welfare of GWO assignments.

\begin{theorem}
\label{thm:gwo-welfare}
Any GWO assignment has social welfare at least $n/(n-1)$ for $n\ge 4$, and $1$ for $n=3$.
Moreover, these bounds cannot be improved.
\end{theorem}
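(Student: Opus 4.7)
My plan is to prove the lower bound by case analysis on $r$ (WLOG $r \leq b$), and to establish tightness by revisiting examples we have already seen. For tightness, the left assignment of \Cref{fig:star} (with two reds and $n-2$ blues on the star) is GWO and attains $\SW = n/(n-1)$, as shown in the proof of \Cref{prop:star-bound}; for $n=3$, the complete graph $K_3$ with one red and two blue agents yields the essentially unique assignment with $\SW_R = 0$ and $\SW_B = 1$, which is trivially GWO since $\max \SW_R = 0$ and the blue welfare is already maximized.

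For the lower bound, the case $r \leq 1$ is clean. If $r = 0$, then $\vv$ must maximize $\SW_B$, and placing all blues in a connected subgraph of $n$ nodes (obtained via BFS from any vertex) gives each blue utility $1$, so $\SW \geq n$. If $r = 1$, then $\SW_R(\vv) = 0$ is forced, and GWO again implies $\vv$ maximizes $\SW_B$. Fix a connected subgraph $V_n \subseteq V$ of size $n$ and let $d_w$ denote the degree of $w$ in $V_n$. The double-counting identity $\sum_{v \in V_n} \sum_{w \in N(v)} 1/d_w = n$ guarantees that some $v^* \in V_n$ has $\sum_{w \in N(v^*)} 1/d_w \leq 1$. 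Placing the lone red at $v^*$ and blues elsewhere then yields $\SW_B = (n-1) - \sum_{w \in N(v^*)} 1/d_w \geq n-2$, which meets the bound $n/(n-1)$ for $n \geq 4$ and equals $1$ for $n = 3$.

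The main case is $r, b \geq 2$. My plan is to introduce the two lexicographic extrema $\vv^R$ (which maximizes $\SW_R$, breaking ties by $\SW_B$) and $\vv^B$ (the symmetric extremum); both are themselves GWO. Since the set of achievable $(\SW_R, \SW_B)$ pairs is finite and its Pareto frontier is a monotone staircase, any GWO $\vv$ either coincides with $\vv^R$ or $\vv^B$ on the frontier or lies strictly between them, in which case $\SW_R(\vv) > \SW_R(\vv^B)$ and $\SW_B(\vv) > \SW_B(\vv^R)$. The theorem then reduces to showing $\SW(\vv^R), \SW(\vv^B) \geq n/(n-1)$ together with a diagonal bound covering intermediate frontier points. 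The extremal assignments can be analyzed constructively: if the topology contains a degree-$1$ vertex $u$, placing two reds at $u$ and its neighbor $v$ contributes red utility $1 + 1/d_v \geq n/(n-1)$ from these two agents alone; for dense topologies such as $K_n$ with $r, b \geq 2$, the assignment-invariant formula $\SW = (r(r-1)+b(b-1))/(n-1)$ already exceeds $n/(n-1)$ for $n \geq 4$.

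The main obstacle is carrying out this case analysis uniformly across all topologies, particularly when both $r$ and $b$ are small and the graph has no low-degree vertices. Because the star and $K_4$ (with $r=b=2$) examples both saturate the bound, the argument has no slack, so the constructions must match the frontier exactly. I anticipate the proof will branch further based on the minimum degree of the topology and will require additional care when intermediate Pareto points between $\vv^R$ and $\vv^B$ exist, for which the diagonal bound $\SW_R(\vv^B) + \SW_B(\vv^R) \geq n/(n-1)$ must be verified separately or replaced by a direct argument on each frontier point.
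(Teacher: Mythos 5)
Your tightness examples and the easy cases $r\le 1$ are fine (the averaging argument $\sum_{v}\sum_{w\in N(v)}1/d_w=n$ is a nice way to handle $r=1$, and for $r\le 1$ GWO does collapse to maximizing $\SW_B$). But the main case $r,b\ge 2$ is where the entire content of the theorem lies, and there you have only a plan, not a proof --- and the plan's central reduction is broken. A GWO assignment is an arbitrary point on the Pareto frontier of achievable $(\SW_R,\SW_B)$ pairs, so bounding $\SW(\vv^R)$ and $\SW(\vv^B)$ says nothing about intermediate frontier points unless your ``diagonal bound'' $\SW_R(\vv^B)+\SW_B(\vv^R)\ge n/(n-1)$ holds. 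It does not: already for the star of \Cref{fig:star} with two red agents, $\vv^B$ places a blue agent at the center and gives $\SW_R(\vv^B)=0$, while $\SW_B(\vv^R)=0$, so the diagonal sum is $0$. (That instance happens to have no intermediate frontier points, but you have no argument ruling them out in general.) Your fallback --- ``a direct argument on each frontier point'' --- is exactly the problem you were asked to solve, so the proposal leaves the theorem essentially unproven for $r,b\ge 2$. The auxiliary constructions (a degree-$1$ vertex giving $\SW_R\ge 1+1/d_v$, the $K_n$ formula) again only bound the welfare of the \emph{extremal} assignments, not of an arbitrary GWO assignment.

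The paper's proof avoids the frontier machinery entirely: it takes an arbitrary GWO assignment $\vv$ with $\SW(\vv)<n/(n-1)$, observes that some agent $i$ (say red) must then have utility $0$ and, when the numbers of agents and nodes coincide, is surrounded by a nonempty set $A$ of blue agents, and then exhibits a group-welfare domination in each of three cases: if a blue agent $j$ exists outside $A$, swapping $i$ and $j$ gives $\SW_B\ge n/(n-1)>\SW_B(\vv)$ without hurting any red agent; if all blue agents are in $A$ but another red agent exists, a swap of $i$ with a suitable $j\in A$ strictly helps $i$ and hurts nobody; if $i$ is the only red agent, connecting the blue agents gives $\SW_B\ge 3/2$. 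A separate short component argument handles the case of more nodes than agents, which your proposal also does not address for $r,b\ge 2$. If you want to salvage your approach, you should abandon the two-extrema reduction and instead argue directly about an arbitrary GWO assignment of low welfare, which inevitably leads you back to a swap argument of this kind.
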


\begin{proof}
To see that the bounds cannot be improved, consider the left assignment in \Cref{fig:star} for $n\ge 4$, and a triangle topology with two red and one blue agents for $n=3$.

Assume first that the number of agents is equal to the number of nodes.
The case $n=3$ can be verified directly, since the only two possible topologies are a triangle and a path.
Let $n\ge 4$, and assume for contradiction that there exists a GWO assignment $\vv$ with social welfare less than $n/(n-1)$.
Since the least possible positive utility of an agent is $1/(n-1)$, this means that some agent receives utility $0$.
Take such an agent $i$, and assume without loss of generality that $i$ is red.
Since the numbers of agents and nodes are equal, $i$ is connected to a set $A\neq\emptyset$ of blue agents.
Consider the following cases.

\medskip

\noindent
\underline{Case 1}: There is a blue agent $j$ outside $A$.
We swap $i$ and $j$ to obtain an assignment $\vv'$.
After the swap, $j$ has utility $1$, and each blue agent in $A$ has utility at least $1/(n-1)$, so $\SW_B(\vv')\ge n/(n-1) > \SW(\vv)\ge \SW_B(\vv)$.
In addition, no red agent receives a lower utility in $\vv'$ than in $\vv$.
Hence $\vv$ is not GWO, a contradiction.

\medskip

\noindent
\underline{Case 2}: There are no blue agents outside $A$, but at least one red agent besides $i$.
Since no node is empty, there is a blue agent $j\in A$ who is adjacent to another red agent.
We swap $i$ and $j$ to obtain an assignment $\vv'$.
No agent receives a lower utility in $\vv'$ than in $\vv$.
Moreover, $i$'s utility strictly increases. 
Hence $\vv$ is not GWO, a contradiction.

\medskip

\noindent
\underline{Case 3}: There are no blue agents outside $A$, and no red agent besides $i$.
This means that $i$ receives utility $0$ in any assignment.
Consider an assignment $\vv'$ such that the blue agents form a connected component.
Each blue agent receives utility at least $1/2$, so $\SW_B(\vv')\ge b/2 \ge 3/2 > n/(n-1)$.
It follows that $\vv$ is group-welfare dominated by $\vv'$, a contradiction.

\medskip

Now, assume that the number of agents is less than the number of nodes. Since $n\ge 3$, any GWO assignment $\vv$ must have $\SW(\vv)>0$, so there exists a connected component (of the topology restricted to the nodes occupied according to $\vv$) with a positive social welfare. Since the assignment restricted to this component is GWO, we are done if the component is of size at least $n'\ge 4$, because then $\SW(\vv)\ge n'/(n'-1)\ge n/(n-1)$. 
If there is a component of size $2$ with positive welfare, the social welfare is at least $2$ and we are also done.
Since any component of positive welfare has a welfare of at least $1$, we are again done if there are at least two components of positive welfare. 
Thus, we can assume that there is a single component of positive welfare of size $3$, which we can moreover assume has the topology of a triangle (the only other possibility being a path, which guarantees a welfare of $3/2$), and that there are exactly two agents of one type, say blue, and one agent of the other type in this component.

If there is another blue agent outside this component, $\vv$ is group-welfare dominated by the assignment that swaps the red agent of the triangle and this blue agent. Finally, consider the case where there is another red agent and no blue agent outside the triangle.
The red agent must have an empty neighboring node. We obtain a group-welfare improvement by moving the red agent of the triangle to this empty node. Hence, the remaining case is that all agents are part of the triangle, meaning that $n=3$; in this case, the social welfare is $1$, as desired.
\end{proof}

We now turn to Pareto optimality, for which we prove a weaker lower bound on the social welfare.

\begin{theorem}
\label{thm:PO-welfare}
When $n\ge 3$, any PO assignment has social welfare at least $1/\sqrt{n}$.
\end{theorem}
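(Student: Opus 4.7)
My plan is to argue by contradiction: suppose that $\vv$ is PO with $\SW(\vv) < 1/\sqrt{n}$. As in the proofs of \Cref{thm:uvo-price,thm:gwo-welfare}, by restricting to a connected component of positive welfare, I may assume the number of nodes equals the number of agents. Since every positive utility is at least $1/(n-1)$, the set $U = \{i : u_i(\vv) > 0\}$ of agents with positive utility satisfies $|U| \le (n-1)\SW(\vv) < \sqrt{n}$, so the set $Z = N \setminus U$ of zero-utility agents has $|Z| > n - \sqrt{n}$.

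The first key step is to show that zero-utility agents all share the same color. Let $Z_R$ and $Z_B$ denote the zero-utility agents of each color; if $r \in Z_R$ and $b \in Z_B$, then the condition $u_r = 0$ forces every graph-neighbor of $v(r)$ to be occupied by a blue agent (else $r$ would have a red friend), and symmetrically every graph-neighbor of $v(b)$ to be red. Swapping $r$ and $b$ then strictly improves every external blue neighbor of $v(r)$ (they gain a friend at $v(r)$) and every external red neighbor of $v(b)$, while at least one of $r$ or $b$ itself strictly improves --- the only obstruction being that both $v(r)$ and $v(b)$ are degree-one nodes adjacent only to each other, which forces $|V| = 2$ and contradicts $n \ge 3$. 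This Pareto improvement contradicts PO, so without loss of generality $Z_B = \emptyset$, giving $b = |U_B| \le |U| < \sqrt{n}$.

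The third and most technical step is to handle the remaining case: fewer than $\sqrt{n}$ blue agents, each with positive utility, and most reds having zero utility. A direct swap of some $b' \in U_B$ with $r \in Z_R$ is usually not Pareto-improving, because $b'$'s blue friends at positions in $N_{v(b')}$ lose $b'$ as a neighbor. My plan is to choose $b'$ and $r$ such that $v(r)$ is itself a common graph-neighbor of all blue friends of $b'$; then after the swap, each such blue friend still sees $b'$ (now at $v(r)$) and its friend count is preserved, while $r$ gains positive utility and the red neighbors of $v(b')$ gain a friend. Existence of such a pair $(b', r)$ should follow from a pigeonhole argument leveraging $b < \sqrt{n}$ (so $b'$ has at most $\sqrt{n} - 1$ blue friends, making the common-neighbor constraint mild) and $|Z_R| > n - \sqrt{n}$ (yielding abundant candidates for $v(r)$). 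The main obstacle is making this existence argument rigorous for arbitrary topologies --- possibly supplemented by a multi-step rearrangement in degenerate cases where no single swap witnesses the Pareto improvement.
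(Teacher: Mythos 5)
Your reduction to the case of equal numbers of agents and nodes, and your second step showing that all zero-utility agents must share the same color (via the swap of a zero-utility red with a zero-utility blue), are both correct and match the paper's argument. The problem is your third step, which is where the entire difficulty of the theorem lives, and the specific mechanism you propose does not work. You ask for a blue agent $b'$ and a zero-utility red agent $r$ such that $v(r)$ is adjacent to \emph{every} blue friend of $b'$; no pigeonhole argument can deliver this, because such a pair simply need not exist. Concretely, let the blue agents occupy the nodes of a short cycle (say a triangle $v_1,v_2,v_3$) and attach $p$ pendant red leaves to each $v_j$, with $p$ large enough that the social welfare $6/(p+2)$ drops below $1/\sqrt{3(p+1)}$. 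Every zero-utility red occupies a degree-one node, while every blue agent has exactly two blue friends, so no red node is adjacent to even two blue positions, let alone all blue friends of some $b'$. Moreover, in this instance no single blue--red swap is a Pareto improvement at all (moving the blue agent at $v_1$ anywhere strictly hurts the blue agent at $v_2$ or $v_3$), so the case you defer to a ``multi-step rearrangement in degenerate cases'' is in fact the generic case, and you have given no argument for it.

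The paper's proof resolves exactly this difficulty with a simultaneous relocation of $b-1$ agents rather than a single swap. Since more than $n-\sqrt{n}$ red agents have only blue neighbors and $b<\sqrt{n}$, some blue agent $i$ has at least $\sqrt{n}-1\ge b-1$ such red neighbors; one swaps \emph{all} $b-1$ blue agents other than $i$ into $b-1$ of these red positions. The relocated blue agents do not need their old friendships preserved---each becomes adjacent to $i$ and, since the vacated red nodes had only blue neighbors, obtains utility at least $1/b>1/\sqrt{n}$, which strictly exceeds its old utility (every agent had utility below $1/\sqrt{n}$); the displaced reds had utility $0$ and their old positions had no red neighbors, so no red agent is harmed; and $i$ gains the $b-1$ relocated blues as neighbors, compensating for any blue friends it loses. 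In my counterexample this corresponds to parking the blue agents from $v_2$ and $v_3$ on two leaves hanging off $v_1$. To repair your proof you would need to replace your single-swap step with this (or an equivalent) multi-agent rearrangement; as written, the existence claim at the heart of your step three is false.
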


\begin{proof}
By an argument similar to that in the upper bound part of \Cref{thm:uvo-price}, and since the function $1/\sqrt{n}$ is decreasing, it suffices to consider the case where the number of agents is equal to the number of nodes.
Let $\vv$ be a PO assignment.
If at least $\sqrt{n}$ agents receive a positive utility, then $\SW(\vv)\ge \sqrt{n}/(n-1) \ge 1/\sqrt{n}$, so assume that fewer than $\sqrt{n}$ agents receive a positive utility.
Similarly, we may assume that every agent receives utility less than $1/\sqrt{n}$.
If there is a red agent and a blue agent both receiving utility $0$, then since no node is empty and $n\ge 3$, swapping them yields a Pareto improvement.
So we may assume that all agents of one type, say blue, receive a positive utility.
This means in particular that $b < \sqrt{n}$, and more than $n-\sqrt{n}$ red agents only have blue neighbors.
Hence, there exists a blue agent $i$ with at least $(n-\sqrt{n})/\sqrt{n} = \sqrt{n}-1\ge b-1$ red neighbors.
Let $A$ be a set containing $b-1$ of these red neighbors.

Swap the $b-1$ blue agents other than $i$ with the red agents in $A$ to obtain an assignment $\vv'$.
Since these red agents are not adjacent to any red agent in $\vv$, no red agent is worse off in $\vv'$.
Each of the $b-1$ blue agents receives utility at least $1/b > 1/\sqrt{n}$ in $\vv'$, so all of them are strictly better off.
Furthermore, $i$ is adjacent to all of these $b-1$ blue agents in $\vv'$, and therefore cannot be worse off.
Hence $\vv$ is not PO, a contradiction.
\end{proof}

Combined with Proposition~\ref{prop:star-bound}, \Cref{thm:PO-welfare} implies that when $n\ge 3$, the price of PO is at least $n-3$ and at most $n\sqrt{n}$. We conjecture that the welfare guarantee in \Cref{thm:PO-welfare} can be improved to $n/(n-1)$ for $n\ge 4$, which would be tight due to the left assignment in \Cref{fig:star}.
Next, we confirm this conjecture when the topology is a tree.

\begin{figure}
\centering
\begin{tikzpicture}[scale=0.7]
\draw (6,9) -- (6,6) -- (4,4.5) -- (3,3);
\draw (6,9) -- (3,7.5) -- (3,6);
\draw (5,3) -- (4,4.5) -- (4,3);
\draw (6,6) -- (8,4.5) -- (7,3);
\draw (8,4.5) -- (9,3);
\draw[fill=white] (6,9) circle [radius = 0.3];
\node[red] at (6,9) {R};
\draw[fill=white] (6,7.5) circle [radius = 0.3];
\node[red] at (6,7.5) {R};
\draw[fill=white] (3,7.5) circle [radius = 0.3];
\node[blue] at (3,7.5) {B};
\draw[fill=white] (6,6) circle [radius = 0.3];
\node[blue] at (6,6) {B};
\draw[fill=white] (3,6) circle [radius = 0.3];
\node[red] at (3,6) {R};
\draw[fill=white] (4,4.5) circle [radius = 0.3];
\node[blue] at (4,4.5) {B};
\draw[fill=white] (8,4.5) circle [radius = 0.3];
\node[blue] at (8,4.5) {B};
\draw[fill=white] (3,3) circle [radius = 0.3];
\node[red] at (3,3) {R};
\draw[fill=white] (4,3) circle [radius = 0.3];
\node[red] at (4,3) {R};
\draw[fill=white] (5,3) circle [radius = 0.3];
\node[red] at (5,3) {R};
\draw[fill=white] (7,3) circle [radius = 0.3];
\node[red] at (7,3) {R};
\draw[fill=white] (9,3) circle [radius = 0.3];
\node[red] at (9,3) {R};
\node at (10.5,3) {Level 1};
\node at (10.5,4.5) {Level 2};
\node at (10.5,6) {Level 3};
\node at (10.5,7.5) {Level 4};
\node at (10.5,9) {Level 5};
\node at (3.45,4.5) {$i$};
\node at (5.45,6) {$j$};
\node at (5.45,7.5) {$k$};
\end{tikzpicture}
\caption{Illustration for the proof of \Cref{thm:PO-welfare-tree}.}
\label{fig:PO-welfare-tree}
\end{figure}
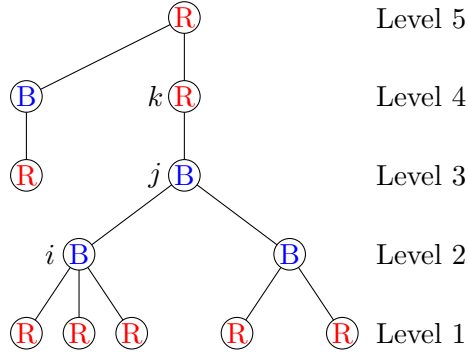

\begin{theorem}
\label{thm:PO-welfare-tree}
When $n\ge 3$ and the topology is a tree, any PO assignment has social welfare at least $n/(n-1)$.
Moreover, this bound cannot be improved.
\end{theorem}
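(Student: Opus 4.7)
The tight upper bound follows from the star topology of \Cref{fig:star}, which is a tree. For the lower bound I would argue by contradiction: suppose $\vv$ is a PO assignment on a tree with $\SW(\vv) < n/(n-1)$. As in the proofs of \Cref{thm:gwo-welfare} and \Cref{thm:PO-welfare}, I would first reduce to the case where the number of agents equals the number of nodes by restricting to a connected component of positive welfare; this reduction suffices because $n/(n-1)$ is decreasing in $n$, so a welfare guarantee of $n'/(n'-1)$ on a component of size $n' \le n$ dominates $n/(n-1)$. Since the maximum degree in the tree is at most $n-1$, any positive utility is at least $1/(n-1)$, so $\SW(\vv) < n/(n-1)$ forces some agent to have utility $0$; say a red agent $a$ does.

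If some blue agent $b$ also has utility $0$, then I would show that swapping $a$ and $b$ yields a Pareto improvement. The key calculation is that every (blue) neighbor of $a$'s former position gains a blue friend, every (red) neighbor of $b$'s former position gains a red friend, and the two swapped agents land in neighborhoods that are monochromatic in the favorable color; the case where $a$ and $b$ are adjacent is handled separately but by an essentially identical computation, using that $n\ge 3$ ensures at least one of the two positions has degree $\ge 2$ so that some agent still strictly gains. Hence from here on I may assume that every blue agent has positive utility while $a$ has utility $0$; in particular, $a$ is isolated in the red-induced subforest, so rooting the tree at $a$ makes every child of $a$ blue, and each blue vertex must have a blue tree-neighbor somewhere.

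The heart of the argument is to exploit this rooted structure to construct a Pareto improvement via a rotation of agents along a downward path from $a$, in the spirit of the level diagram preceding the theorem. Concretely, the plan is to follow a path $a=v_0, v_1, \dots, v_m$ that descends through blue vertices (each blue vertex having a blue neighbor gives a candidate continuation) and terminates at a red vertex with a red tree-neighbor; such a termination must exist by the finiteness of the tree and the assumption that the remaining reds are not all isolated. I then cyclically shift the agent at each $v_j$ to $v_{j-1}$ and relocate $a$ to $v_m$, where it now sits next to a red neighbor and gains positive utility. The crucial property is that in a tree every off-path vertex has at most one on-path neighbor, so each such agent sees exactly one color change along the rotation, which the choice of path arranges to be favorable or neutral; combined with the alternation of colors along the path, this yields the required Pareto improvement.

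The main obstacle I expect is pinning down the path construction so that (i) a suitable terminating configuration is reached and (ii) every off-path sibling of some $v_j$ is genuinely not harmed. I would resolve (i) by induction on the depth, using the standing assumption that every blue agent has a blue neighbor to extend the chain of blue vertices downward until forced into a red--red adjacency, and (ii) by a local case analysis at each $v_j$ based on the colors of its siblings and of $v_{j+1}$, relying on acyclicity of the tree to ensure that the rotation's effect on each off-path agent is a single clean color flip rather than a mixture.
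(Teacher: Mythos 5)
Your opening moves match the paper's: tightness via the star, reduction to the case where the number of agents equals the number of nodes, and elimination (by a swap) of the case where both colors have a zero-utility agent. But the core of your argument---a cyclic rotation along a descending path from the zero-utility red agent $a$---diverges from the paper's proof and has genuine gaps. First, the terminating configuration need not exist: you assume ``the remaining reds are not all isolated,'' but every red agent can have utility $0$ (e.g., the path $R\text{--}B\text{--}B\text{--}R$), and your descending chain of blue vertices can also dead-end at a blue leaf or at a blue vertex all of whose children are blue, leaving no red terminus on that branch. Second, and more fundamentally, the rotation is not a Pareto improvement in general. Your ``at most one on-path neighbor'' observation only controls off-path agents, and even for those it fails at $v_{m-1}$: any blue off-path child of $v_{m-1}$ sees its parent flip from blue to red and strictly loses utility (it had positive utility by assumption). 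For on-path agents the situation is worse: the agent moved from $v_j$ to $v_{j-1}$ keeps the colors of its two on-path neighbors but inherits the \emph{entire} off-path neighborhood of $v_{j-1}$, which can have a completely different color composition from that of $v_j$; no local case analysis at $v_j$ controls this.

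The telltale sign that the approach cannot work as stated is that your construction never invokes the hypothesis $\SW(\vv) < n/(n-1)$ after extracting the zero-utility agent. Yet there exist PO assignments on trees in which all blue agents are positive and some red agent has utility $0$ (e.g., a star with a blue center, at least one blue leaf, and some red leaves), so no rotation or swap can Pareto-improve them; any argument that claims to always produce an improvement from those hypotheses alone must be wrong. The paper's proof uses the welfare bound as its structural anchor: if any leaf agent is adjacent to a same-type agent, then $\SW \ge 1 + 1/(n-1) = n/(n-1)$ immediately, so one may assume \emph{every} leaf is a red agent with utility $0$. Rooting the tree at an arbitrary node and examining the deepest level then forces parents and grandparents of the deepest leaves to be blue, and a short case analysis on the great-grandparent $k$ yields either a direct welfare count of $n/(n-1)$ or a single two-agent swap that Pareto-improves. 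You would need to incorporate this leaf observation (or an equivalent use of the welfare hypothesis) and replace the rotation with a mechanism whose effect on neighborhoods you can actually control.
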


\begin{proof}
The bound cannot be improved due to the left assignment in \Cref{fig:star}.
To establish the bound, note first that by an argument similar to that in the upper bound part of \Cref{thm:uvo-price}, and since the function $n/(n-1)$ is decreasing, it suffices to consider the case where the number of agents is equal to the number of nodes.
Let $\vv$ be a PO assignment.
As in the proof of \Cref{thm:PO-welfare}, we may assume that all agents of one type, say blue, receive a positive utility.
If an agent occupying a leaf node is adjacent to another agent of the same type, then $\SW(\vv)\ge 1+1/(n-1)$ and we are done.
Hence, assume that all leaf nodes are occupied by red agents receiving utility $0$.

Root the tree at an arbitrary node.
The deepest level of the tree consists only of leaf nodes; call this ``level 1'', and call the other levels accordingly (see \Cref{fig:PO-welfare-tree}).
Consider an arbitrary parent $i$ of a leaf at level 1---this parent must be blue.
Since all blue agents receive a positive utility, $i$ must have a blue parent $j$.
Any branch originating from $j$ has to go down to level 1; otherwise, the branch stops at level 2 with a red leaf, and we can swap this red leaf with $i$ to obtain a Pareto improvement.
In particular, in $j$'s subtree, all nodes on level 2 are blue.
If $j$ does not have a parent or has a blue parent, then she has utility $1$ and $\SW(\vv)\ge n/(n-1)$.
Assume therefore that $j$ has a red parent $k$. 
Note that $j$ receives utility at least $1/2$, and any of its child at least $1/(n-1)$.

Suppose that $k$ has another child. 
If $k$ has a red child, this child cannot be a leaf (because all leaves have utility $0$), so it must in turn have a blue child (otherwise it receives utility $1$ and we are done), but then this blue child receives utility $0$, a contradiction.
Hence $k$ can only have blue children.
Suppose that $k$ has a blue child $\ell\ne j$, which cannot be a leaf.
If $\ell$ only has blue children, it receives utility at least $1/2$, and we are done since $\SW(\vv)\ge 1/2 + 1/(n-1) + 1/2 = n/(n-1)$.
So assume that $\ell$ has a red child $m$, which must also be a leaf.
Since all blue agents receive a positive utility, $\ell$ must also have a blue child $o$, which must in turn have only red children.
Now, swapping $m$ and $o$ leads to a Pareto improvement, a contradiction.
Hence we may assume that $k$'s only child is $j$.

Finally, if $k$ does not have a parent or has a blue parent, swapping $k$ with $i$ yields a Pareto improvement.
So assume that $k$ has a red parent.
This means that $k$ receives utility $1/2$.
Combining this with the utility of $j$ and her children, we again have $\SW(\vv)\ge n/(n-1)$.
\end{proof}

Together with Proposition~\ref{prop:star-bound}, which holds for trees, \Cref{thm:PO-welfare-tree} gives a tight bound on the price of PO for trees.

\begin{corollary}
\label{cor:PO-price-tree}
When the topology is a tree, the price of PO is $\Theta(n)$.
\end{corollary}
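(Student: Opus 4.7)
The plan is to combine the two bounds already at our disposal: the upper bound on the price of PO for trees comes directly from Theorem \ref{thm:PO-welfare-tree}, while the matching lower bound comes from Proposition \ref{prop:star-bound}, whose witnessing instance happens to be a tree.

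For the upper bound, I would fix an arbitrary tree instance $I$ with $n \geq 3$ agents and an arbitrary PO assignment $\vv$. Theorem \ref{thm:PO-welfare-tree} guarantees $\SW(\vv) \geq n/(n-1)$. Since $\SW(\vv^*(I)) \leq n$ for any assignment, the ratio $\SW(\vv^*(I))/\SW(\vv)$ is at most $n(n-1)/n = n-1$, which is $O(n)$. For small $n$ (namely $n = 2$) the price is trivially $1$, so the bound holds uniformly.

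For the lower bound, I would invoke Proposition \ref{prop:star-bound}, which establishes that the price of PO is at least $n-3$ on a star topology with two red and $n-2$ blue agents. A star is a tree, so this construction lies inside the class of tree instances; hence the price of PO for trees is at least $n-3 = \Omega(n)$.

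There is no real obstacle here, since both the matching bounds have been proved in the immediately preceding results and the star used in the lower-bound example is itself a tree; the corollary is just the observation that the two bounds coincide in order of magnitude on the class of trees.
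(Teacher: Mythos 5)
Your proof is correct and follows exactly the paper's argument: the upper bound of $n-1$ on the price follows from \Cref{thm:PO-welfare-tree} together with $\SW(\vv^*(I))\le n$, and the lower bound of $n-3$ comes from \Cref{prop:star-bound}, whose star instance is a tree. Nothing further is needed.
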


To finish this section, we show that if $b/r\in \Theta(1)$, i.e., the fraction of agents of each type is at least a certain constant, then a constant welfare can again be guaranteed.

\begin{proposition}
\label{prop:b/r-constant}
Suppose that the number of agents is equal to the number of nodes.
Then any PO assignment has social welfare at least
$\min\left\{\frac{b}{r+1}, \frac{r}{b+1}\right\}$.
\end{proposition}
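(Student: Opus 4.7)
The plan is to assume without loss of generality that $b \le r$, so that $\min\{b/(r+1), r/(b+1)\} = b/(r+1)$, and show that any PO assignment $\vv$ satisfies $\SW(\vv) \ge b/(r+1)$.

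The first step is an arithmetic observation: any blue agent with positive utility has utility at least $1/(r+1)$. Indeed, if such an agent has $f \ge 1$ friends and $e \le r$ enemies among its neighbors, then $f/(f+e) \ge 1/(1+r)$, as this is equivalent to $rf \ge e$, which holds since $e \le r \le rf$. Symmetrically, a red agent with positive utility has utility at least $1/(b+1)$. Thus the proposition reduces to the main claim: in any PO assignment, either every blue agent has positive utility, yielding $\SW(\vv) \ge b/(r+1)$, or every red agent does, yielding $\SW(\vv) \ge r/(b+1) \ge b/(r+1)$, where the last inequality uses $b \le r$ (equivalently $r(r+1) \ge b(b+1)$).

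To prove the main claim, I would argue by contradiction: suppose a blue agent $i$ at node $v$ and a red agent $j$ at node $w$ both have utility $0$. Since every node is occupied, every neighbor of $v$ is red and every neighbor of $w$ is blue; in particular $N_v \cap N_w = \emptyset$, because any common neighbor would have to be simultaneously red and blue. I would then perform the swap of $i$ and $j$ and verify it is a Pareto improvement. Any agent at a node in $N_v \setminus \{w\}$ is red and, after the swap, sees its neighbor at $v$ flip from blue (enemy) to red (friend), strictly increasing its utility; symmetrically for agents at nodes in $N_w \setminus \{v\}$, and by the disjointness of $N_v$ and $N_w$ no other agent is affected at both endpoints. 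If $v$ and $w$ are not adjacent, agent $i$ at $w$ now sees only blue neighbors (utility $1$), and likewise $j$ at $v$ attains utility $1$. If $v$ and $w$ are adjacent with $\deg(w) \ge 2$, then $i$'s neighborhood at $w$ consists of $\deg(w) - 1 \ge 1$ blue agents plus $j$ at $v$, giving positive utility; the same holds for $j$ provided $\deg(v) \ge 2$. The only remaining configuration has $v \sim w$ and $\deg(v) = \deg(w) = 1$, which by connectedness forces $n = 2$ and is outside the scope of the proposition.

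The main technical obstacle is the case analysis of the swap when $v$ and $w$ are adjacent; the key structural fact is the disjointness $N_v \cap N_w = \emptyset$, which prevents any third agent from being affected through two flipped endpoints and confines the only failure of the Pareto improvement to the trivially small instance $n = 2$.
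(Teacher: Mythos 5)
Your proof is correct and follows essentially the same route as the paper's: the paper likewise reduces to the claim that all agents of one type must be positive in a PO assignment (citing the swap argument from the proof of \Cref{thm:PO-welfare}) and then lower-bounds each positive blue agent's utility by $1/(r+1)$. Your write-up merely makes explicit the case analysis of the swap (adjacency of $v$ and $w$, the disjointness of their neighborhoods, and the degenerate $n=2$ configuration) that the paper leaves implicit.
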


\begin{proof}
Let $\vv$ be a PO assignment.
Since the number of agents is equal to the number of nodes, as in the proof of \Cref{thm:PO-welfare}, we may assume that all agents of one type, say blue, receive a positive utility.
Since a blue agent is adjacent to at least one other blue agent and at most $r$ red agents, each blue agent receives utility at least $1/(r+1)$.
This means that $\SW(\vv)\ge b/(r+1)$.
Similarly, if all red agents receive a positive utility, then $\SW(\vv)\ge r/(b+1)$.
The conclusion follows.
\end{proof}

When the ratio $b/r$ is upper and lower bounded by constants, the welfare guarantee provided by \Cref{prop:b/r-constant} is also constant, which is at most a linear factor away from the maximum welfare.
Since \Cref{fig:complete-bipartite} shows an instance with $b=r$ and a PO assignment whose welfare is a linear factor away from the maximum welfare, we obtain the following:

\begin{corollary}
\label{cor:b/r-constant}
When $b/r\in\Theta(1)$ and the number of agents is equal to the number of nodes, the price of PO is $\Theta(n)$.
\end{corollary}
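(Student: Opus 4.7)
The plan is to obtain both directions of the $\Theta(n)$ bound by combining two results already present in the paper. For the $O(n)$ upper bound I will invoke \Cref{prop:b/r-constant}, and for the $\Omega(n)$ lower bound I will reuse the complete bipartite example of \Cref{fig:complete-bipartite}. Since the social welfare never exceeds $n$, the task reduces to pinning down the order of the worst-case PO welfare in this regime.

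For the upper bound, fix constants $c_1, c_2 > 0$ with $c_1 \le b/r \le c_2$. Then $b/(r+1) \ge c_1 \cdot r/(r+1) \ge c_1/2$ for $r \ge 1$, and symmetrically $r/(b+1) \ge (1/c_2)\cdot b/(b+1) \ge 1/(2c_2)$ for $b \ge 1$. Hence by \Cref{prop:b/r-constant}, every PO assignment $\vv$ satisfies $\SW(\vv) = \Omega(1)$, with the implicit constant depending only on $c_1, c_2$. Combined with $\SW(\vv^*(I)) \le n$, this yields a price of PO of $O(n)$.

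For the lower bound, I instantiate the complete bipartite topology of \Cref{fig:complete-bipartite} with $r = b = n/2$, so that $b/r = 1 \in \Theta(1)$ and the number of agents equals the number of nodes. The left assignment $\vv$ was shown in the proof of \Cref{prop:uvo-gwo} to be UVO with $\SW(\vv) = 4 - 8/n$, while the right assignment $\vv'$ satisfies $\SW(\vv') = n/2$. Because UVO implies PO (\Cref{fig:implications}), $\vv$ is itself a valid PO witness, and the ratio $\SW(\vv^*(I))/\SW(\vv) \ge (n/2)/(4 - 8/n) = \Omega(n)$ delivers the lower bound.

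The only delicate point is verifying that the $\Theta(1)$ assumption on $b/r$ yields a positive constant lower bound on $\min\{b/(r+1), r/(b+1)\}$ uniformly in $n$; this is the short calculation performed above. The remainder amounts to checking that the constructed example indeed lives in the admitted regime and that UVO transfers the lower bound to PO at no additional cost, so no genuine obstacle arises beyond these routine checks.
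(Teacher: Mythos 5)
Your proposal is correct and follows essentially the same route as the paper: the upper bound comes from \Cref{prop:b/r-constant} giving a constant lower bound on the welfare of any PO assignment when $b/r\in\Theta(1)$, and the lower bound reuses the complete bipartite instance of \Cref{fig:complete-bipartite} with $b=r$, whose left assignment is UVO (hence PO) with welfare $4-8/n$ against a maximum welfare of $n/2$. The explicit verification that $\min\{b/(r+1),\, r/(b+1)\}$ is bounded below by a constant is the same (routine) observation the paper makes in the sentence preceding the corollary.
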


\section{Computing Optimal Assignments}\label{sec:compopt}

In the previous section, we introduced two new concepts of optimality and studied the welfare guarantees that they provide along with Pareto optimality.
In this section, we continue our investigation of these optimality notions by examining the complexity of computing assignments satisfying them. 
Furthermore, we consider other common welfare notions such as egalitarian welfare and Nash welfare.

First, we observe that in the reduced instances of \Cref{thm:SW-hardness}, an assignment is GWO if and only if it maximizes the social welfare.\footnote{To see this, note that for any assignment $\vv$, the sum of utilities of the red and blue agents is equal to $\SW_R(\vv) = r -\delta(\vv)/\rho$ and $\SW_B(\vv) = b -\delta(\vv)/\rho$, respectively, where $\delta(\vv)$ denotes the number of edges connecting a red agent and a blue agent in $\vv$. Hence, both the GWO and maximum-welfare assignments are precisely the assignments minimizing the number of these interconnection edges.}
This immediately yields the following intractability.

\begin{theorem}
\label{thm:GWO-hardness}
Computing a GWO assignment is NP-hard, even for the class of Schelling instances where the number of agents is equal to the number of nodes and the topology is a regular graph.
\end{theorem}

For the other two optimality notions, we will establish a close relationship to the problem of computing a ``perfect assignment'', wherein every agent receives utility $1$.

\begin{definition}
An assignment $\vv$ is called \emph{perfect} if $u_i(\vv) = 1$ for all $i\in N$.
\end{definition}

We start by showing the hardness of this problem.

\begin{figure}
\centering
\begin{tikzpicture}[scale=0.7]
\draw (2,2) -- (5,3.5) -- (2,3.5);
\draw (5,3.5) -- (2,5);
\draw (2,6.5) -- (5,8) -- (2,8);
\draw (5,8) -- (2,9.5);
\draw (5,8) -- (6,9.5) -- (8,9.5) -- (8,6.5) -- (6,6.5) -- (5,8) -- (8,9.5) -- (6,6.5) -- (6,9.5) -- (8,6.5) -- (5,8);
\draw (5,8) -- (9,8) -- (12,5.75);
\draw (5,3.5) -- (6,5) -- (8,5) -- (8,2) -- (6,2) -- (5,3.5) -- (8,5) -- (6,2) -- (6,5) -- (8,2) -- (5,3.5);
\draw (5,3.5) -- (9,3.5) -- (12,5.75);
\draw[fill=white] (2,2) circle [radius = 0.3];
\draw[fill=white] (2,3.5) circle [radius = 0.3];
\draw[fill=white] (2,5) circle [radius = 0.3];
\draw[fill=white] (2,6.5) circle [radius = 0.3];
\draw[fill=white] (2,8) circle [radius = 0.3];
\draw[fill=white] (2,9.5) circle [radius = 0.3];
\draw[fill=white] (5,8) circle [radius = 0.3];
\draw[fill=white] (6,9.5) circle [radius = 0.3];
\draw[fill=white] (8,9.5) circle [radius = 0.3];
\draw[fill=white] (9,8) circle [radius = 0.3];
\draw[fill=white] (8,6.5) circle [radius = 0.3];
\draw[fill=white] (6,6.5) circle [radius = 0.3];
\draw[fill=white] (5,3.5) circle [radius = 0.3];
\draw[fill=white] (6,5) circle [radius = 0.3];
\draw[fill=white] (8,5) circle [radius = 0.3];
\draw[fill=white] (9,3.5) circle [radius = 0.3];
\draw[fill=white] (8,2) circle [radius = 0.3];
\draw[fill=white] (6,2) circle [radius = 0.3];
\draw[fill=white] (12,5.75) circle [radius = 0.3];
\node at (1.4,9.5) {$1$};
\node at (1.4,8) {$2$};
\node at (1.4,6.5) {$3$};
\node at (1.4,5) {$4$};
\node at (1.4,3.5) {$5$};
\node at (1.4,2) {$6$};
\node at (5,2.8) {$y_1$};
\node at (5.35,5) {$y_2$};
\node at (8.65,5) {$y_3$};
\node at (6,1.4) {$y_4$};
\node at (8,1.4) {$y_5$};
\node at (9.5,3) {$y_6$};
\node at (5,8.65) {$x_1$};
\node at (6,10.1) {$x_2$};
\node at (8,10.1) {$x_3$};
\node at (5.35,6.5) {$x_4$};
\node at (8.65,6.5) {$x_5$};
\node at (9.5,8.5) {$x_6$};
\node at (12.55,5.75) {$z$};
\end{tikzpicture}
\caption{Illustration for the proof of Theorem~\ref{thm:perfect} when $R=\{1,2,3,4,5,6\}$ and $S=\{x,y\}$ where $x=\{1,2,3\}$ and $y=\{4,5,6\}$.}
\label{fig:perfect}
\end{figure}
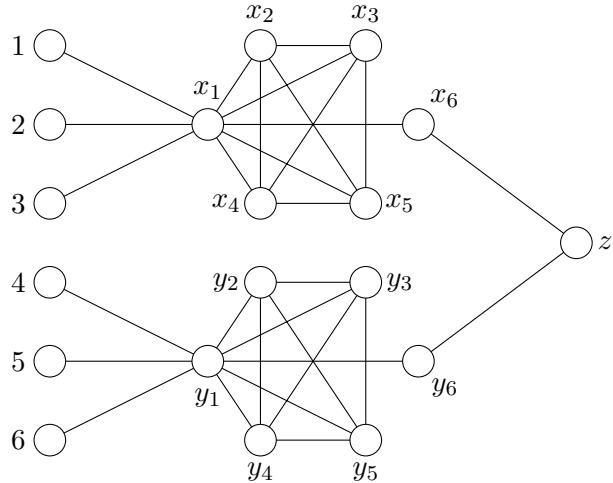

\begin{theorem}\label{thm:perfect}
Deciding whether there exists a perfect assignment is NP-complete.
\end{theorem}

\begin{proof}
Membership in NP is clear: a perfect assignment can be verified in polynomial time.

The hardness reduction is from \textsc{Exact 3-Cover (X3C)}. An instance of \textsc{X3C} consists of a tuple $(R,S)$, where $R$ is a ground set whose size is divisible by $3$, and $S$ is a collection of $3$-element subsets of $R$, where $|S|\ge |R|/3$. 
A Yes-instance is an instance in which there exists a subcollection $S'\subseteq S$ of size $|R|/3$ that exactly partitions~$R$.
It is well-known that \textsc{X3C} is NP-hard \citep[p.~221]{GareyJo79}.

Let an instance $(R,S)$ of \textsc{X3C} be given. We define a Schelling instance with $1 + |S| - |R|/3$ blue and $|R| + 2|S|^2 + |R|/3$ red agents. 
Define the topology graph $G = (V,E)$ with $V = R\cup \{s_i\colon 1\le i\le 2|S| +2, s\in S\}\cup \{z\}$ and edges given by

\begin{itemize}
    \item $\{r,s_1\}\in E$ if $r\in R$ and $s\in S$ with $r\in s$;
    \item $\{s_i,s_j\}\in E$ for $s\in S$, $1\le i,j\le 2|S|+1$;
    \item $\{s_1,s_{2|S|+2}\},\{s_{2|S|+2},z\}\in E$ for $s\in S$; and
    \item no further edges are in $E$.
\end{itemize}
See Figure~\ref{fig:perfect} for an illustration.
Note that the number of nodes is $|R|+2|S|^2+2|S|+1$ and the number of agents is $|R|+2|S|^2+|S|+1$, so exactly $|S|$ nodes are left empty in any assignment.
We claim that $(R,S)$ is a Yes-instance if and only if there exists a perfect assignment in the Schelling instance.

Assume first that $(R,S)$ is a Yes-instance, and let $S'\subseteq S$ be a partition of $R$ using sets in $S$. 
We assign the blue agents to the nodes in the set $A = \{z\}\cup \{s_{2|S|+2}\colon s\in S\setminus S'\}$, and the red agents to the vertices in the set $B = R\cup \{s_i\colon 2\le i\le 2|S| +1, s\in S\}\cup \{s_1\colon s\in S'\}$. 
Note that $|A| = 1 + |S| - |R|/3$, $B = |R|  + 2|S|^2 + |R|/3$, and the nodes in $A$ induce a connected subgraph of $G$ that has no neighbor in $B$. 
This means that all blue agents receive utility $1$.
Since $S'$ covers $R$, all red agents also receive utility $1$, meaning that the assignment is perfect.

Conversely, assume that there is a perfect assignment. 
Since every assignment leaves exactly $|S|$ nodes empty, no blue agent can be assigned to a vertex in $\{s_i\colon 1\le i\le 2|S| +1, s\in S\}$, because she would then have a red neighbor. 
Additionally, no blue agent can be assigned to a vertex in $R$, because then some of her only neighbors in the set $\{s_1\colon s\in S\}$ would also have to be blue, which is impossible by the previous sentence. 
We conclude that the blue agents are assigned to the nodes in $\{z\}\cup \{s_{2|S|+2}\colon s\in S\}$, which means in particular that some blue agent is assigned to $z$. 
Define $S' = \{s\in S\colon s_1 \textnormal{ is red}\}$. Then, the empty nodes are precisely $\{s_{2|S|+2}\colon s\in S'\}\cup\{s_1 \colon s \in S\setminus S'\}$, and we have $|S'| = |R|/3$. 
In particular, all nodes in $R$ must have red agents. 
Now, such an agent can receive a positive utility only if at least one of her neighbors is red. 
Hence, $S'$ covers $R$. 
Since $|S'| = |R|/3$, it follows that $S'$ forms a partition of $R$. 
\end{proof}

\Cref{thm:perfect} turns out to be particularly useful for deriving hardness results with respect to other optimality and welfare notions.

\begin{corollary}
Computing a UVO assignment (resp., PO assignment) is NP-hard.
\end{corollary}

\begin{proof}
Observe that if there exists a perfect assignment in an instance, then every UVO (resp., PO) assignment in that instance must be perfect.
Hence, an algorithm for computing a UVO (resp., PO) assignment can be used to decide whether a perfect assignment exists.
The conclusion follows from \Cref{thm:perfect}.
\end{proof}

In addition, we obtain the hardness of computing assignments with maximum egalitarian or Nash welfare. 
Recall that the \emph{egalitarian welfare} of an assignment is the minimum among the agents' utilities in that assignment, and the \emph{Nash welfare} is the product of the agents' utilities in that assignment.

\begin{corollary}
The following problem is NP-complete: Given a Schelling instance and a rational number $s$, decide whether there exists an assignment with egalitarian (resp., Nash) welfare at least $s$.
\end{corollary}

\begin{proof}
Membership in NP is trivial.
For the hardness, observe that an assignment is perfect if and only if its egalitarian (resp., Nash) welfare is (at least) $1$, so the conclusion follows from \Cref{thm:perfect}.
\end{proof}

We emphasize that it is crucial for \Cref{thm:perfect} and its corollaries that the number of nodes in the topology is larger than the number of agents. 
If the two numbers are equal, then perfect assignments do not exist (unless there is only one type of agents), and the corresponding decision problem becomes trivial. 
Nevertheless, it remains interesting to ask for assignments that are ``individually optimal'' for all agents. 

\begin{definition}
An assignment $\vv$ is called \emph{individually optimal for agent $i$} if $u_i(\vv) \ge u_i(\vv')$ for all assignments $\vv'$. An assignment is called \emph{individually optimal} if it is individually optimal for all agents.
\end{definition}

An example of an individually optimal assignment is shown in \Cref{fig:individually-optimal}, where every agent receives a utility of $1/2$.

\begin{figure}
\centering
\begin{tikzpicture}[scale=0.7]
\draw  (2,0) -- (2,4);
\draw  (2,0) to[out=120,in=240] (2,4);
\draw  (5,0) -- (5,4);
\draw  (5,0) to[out=60,in=-60] (5,4);
\draw  (5,0) -- (2,2) -- (5,4) -- (2,4) -- (5,2) -- (2,0) -- (5,0);
\draw[fill=white] (2,2) circle [radius = 0.3];
\node[blue] at (2,2) {B};
\draw[fill=white] (2,0) circle [radius = 0.3];
\node[blue] at (2,0) {B};
\draw[fill=white] (2,4) circle [radius = 0.3];
\node[blue] at (2,4) {B};
\draw[fill=white] (5,2) circle [radius = 0.3];
\node[red] at (5,2) {R};
\draw[fill=white] (5,0) circle [radius = 0.3];
\node[red] at (5,0) {R};
\draw[fill=white] (5,4) circle [radius = 0.3];
\node[red] at (5,4) {R};
\end{tikzpicture}
\caption{Example of an individually optimal assignment.}
\label{fig:individually-optimal}
\end{figure}
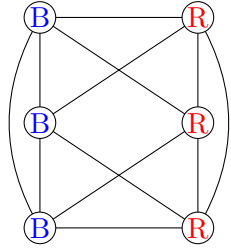

Note that it is easy to compute the utility that an agent receives in an individually optimal assignment for her: Assign this agent to a node of minimum degree; then, assign agents of the same type to as many neighbors as possible and leave other neighbors empty; finally, if needed, assign agents from the other type to the remaining neighbors. 
An individually optimal assignment for all agents, if it exists, is clearly optimal with respect to all welfare measures that we consider and treats agents of the same type equally. 

In the reduction of \Cref{thm:perfect}, we can assume that there are at least three agents of each type, and some nodes in the topology have\footnote{More precisely, we can assume without loss of generality that $|S| \ge 2 + |R|/3$ in an instance $(R,S)$ of X3C, since the problem can be solved by brute force otherwise. Then, there are at least three blue agents. Also, there are at least three red agents whenever $|R|\ge 3$. Finally, nodes of the type $s_{2|S|+2}$ have degree $2$.} degree $2$. Hence, an assignment is individually optimal if and only if every agent receives utility~$1$. Consequently, perfect assignments and individually optimal assignments coincide, and we obtain the following:

\begin{corollary}
Deciding whether there exists an individually optimal assignment is NP-complete.
\end{corollary}

However, we show next that if the numbers of agents and nodes are equal, then this decision problem becomes efficiently solvable. 
More precisely, we provide simple conditions characterizing the instances with an individually optimal assignment. 
Recall the definition of a complement graph from Footnote~\ref{fn:complement}.

\begin{theorem}
\label{thm:individually-optimal}
Given a Schelling instance with an equal number of agents and nodes, there exists an individually optimal assignment if and only if all of the following three conditions are met:
\begin{enumerate}[(1)]
    \item The numbers of red and blue agents are the same, or the topology graph is a complete graph.
    \item The topology graph is regular.
    \item The complement graph of the topology graph is bipartite.
\end{enumerate}
Deciding whether the three conditions are met can be done in polynomial time.
Moreover, if all three conditions are met, we can compute an individually optimal assignment in polynomial time.
\end{theorem}

\begin{proof}
Consider a Schelling instance with an equal number of agents and nodes and a topology graph $G = (V,E)$. 
Denote the complement graph by $\overline G$. 
Note that if $G$ is a complete graph, then all assignments are individually optimal and conditions (2) and (3) are trivially met. 
Therefore, assume from now on that $G$ is not a complete graph. 

First, suppose that there exists an individually optimal assignment, and consider such an assignment.
Recall that there are $b$ blue and $r$ red agents. 
Let $\delta$ denote the minimum degree among the nodes of $G$. 
If $b\ge \delta +1$ or $r\ge \delta +1$, then the individually optimal utility for one type of agents is $1$, and there does not exist an individually optimal assignment.
Hence, this cannot be the case, and the individually optimal utility is $(b-1)/\delta$ and $(r-1)/\delta$ for the blue and red agents, respectively. 
In particular, the topology graph must be $\delta$-regular; otherwise, an agent assigned to a node with degree greater than $\delta$ cannot receive her individually optimal utility. 
Condition (2) is therefore met.

Let $B$ and $R$ be the sets of nodes occupied by the blue and red agents, respectively. 
For the assignment to be individually optimal, $B$ and $R$ must form cliques in the topology graph. 
In other words, $(B,R)$ forms a bipartition of the complement graph, meaning that condition (3) is met.

The regularity of the topology graph implies the regularity of its complement graph, which is not the empty graph (because we already excluded a complete topology graph). 
Since $(B,R)$ forms a bipartition of the complement graph which is $(n-1-\delta)$-regular, the number of edges adjacent to exactly one node in $B$ is $(n-1-\delta)b$, the number of edges adjacent to exactly one node in $R$ is $(n-1-\delta)r$, and these two numbers must be equal.
It follows that $b = r$, so condition (1) is met as well.

Conversely, assume that the three conditions are met. 
Since we have already handled the case of a complete topology graph, we have that the numbers of blue and red agents are the same.

Let $(X,Y)$ be a bipartition of the node set $V$ in $\overline G$. 
Since $G$ is regular, so is $\overline{G}$.
Again, by counting the (nonzero) number of edges incident to nodes in $X$ and $Y$, regularity implies that $|X| = |Y|$. 
Now, consider the assignment that assigns all blue agents at the nodes in $X$ and all red agents at the nodes in $Y$; this assignment is feasible because the numbers of blue and red agents are the same. 
Since $(X,Y)$ forms a bipartition in the complement graph, the set of blue agents as well as the set of red agents form cliques in the topology graph. 
Regularity therefore implies that every agent is individually optimal, so the assignment is individually optimal.

The proof of the converse also shows that we can compute an individually optimal assignment in polynomial time by computing the complement graph and an arbitrary bipartition. 
This yields an individually optimal assignment, provided that all three conditions are met.
It is clear that checking whether the three conditions are met can also be done in polynomial time.
\end{proof}

The key algorithmic problem in the proof of \Cref{thm:individually-optimal} is to decide whether the nodes of a given regular graph can be partitioned into two equally-sized subsets in such a way that each subset forms a clique.
This problem is related to some NP-hard problems, and its tractability may therefore be of broader interest. 
Indeed, it appears similar not only to the \textsc{Maximum Clique} problem, which we have seen to be NP-hard for regular graphs (cf. \Cref{thm:SW-hardness}), but also to the \textsc{Minimum Bisection} problem, which is likewise NP-hard for regular graphs \citep{BuiChLe87}. The latter problem asks for a partition of the nodes of a given graph into two equally-sized subsets such that the number of edges between these two sets is minimized. 
Our proof of \Cref{thm:individually-optimal} shows that the variant where we ask for the two subsets of nodes to form cliques is solvable in polynomial time.

\section{Number of Positive Agents}\label{sec:positive}

In this section, we consider the problem of maximizing the number of agents receiving a positive utility, who we refer to as \emph{positive agents}. 
This problem is closely related to egalitarian and Nash welfare, because an assignment has nonzero egalitarian (resp., Nash) welfare if and only if it makes every agent positive.
Notice that it is not always possible to make every agent positive---for example, in a star, every agent whose type is different from the center agent receives zero utility.
We begin by showing that for trees, deciding whether it is possible to make every agent positive can be done efficiently.
Our algorithm is based on dynamic programming and shares some similarities with the algorithm of \citet{ElkindGaIg19} for deciding whether an equilibrium exists on a tree.

\begin{theorem}
\label{thm:NP-positive-agents}
There is a polynomial-time algorithm that decides whether there exists an assignment in which every agent receives a positive utility when the topology is a tree.
\end{theorem}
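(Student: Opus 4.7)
The plan is to perform a bottom-up dynamic program on the rooted tree. Root the tree at an arbitrary node. For every node $v$ and subtree $T_v$, I maintain a table of feasible signatures of the form $(c,s,r',b')$, where $c\in\{R,B,\text{empty}\}$ records the content placed at $v$, the flag $s\in\{0,1\}$ records whether $v$ (when $c\neq\text{empty}$) already has at least one same-color neighbor lying inside $T_v$, and $r',b'$ count the red and blue agents used in $T_v$. A signature is declared feasible only if every agent strictly below $v$ in $T_v$ already has a friend among its neighbors in $G$; the agent sitting at $v$ itself is permitted to be unsatisfied ($s=0$) since the parent of $v$ may still provide its friend.

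The base case (leaves) is immediate: the three signatures $(R,0,1,0)$, $(B,0,0,1)$, $(\text{empty},0,0,0)$ are feasible. At an internal node $v$, I fix a color $c\in\{R,B,\text{empty}\}$ for $v$ and fold in the children one at a time, maintaining a running table indexed by $(s,r',b')$. When absorbing a child $u$ with signature $(c_u,s_u,r_u,b_u)$: if $c_u=\text{empty}$, nothing happens; if $c_u\neq\text{empty}$ and $c_u\neq c$, then $v$ cannot help $u$, so we accept the child only when $s_u=1$; if $c_u=c\neq\text{empty}$, then the edge $\{v,u\}$ simultaneously satisfies $u$ regardless of $s_u$ and switches $s$ to $1$ for $v$. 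Counts are added component-wise, with an extra $+1$ in the appropriate coordinate coming from $c$ itself. After processing the root, the instance is a yes-instance exactly when the root's table contains a signature with $r'=r$, $b'=b$, and with $s=1$ whenever the root's color is non-empty (the root has no parent to rescue it). Correctness follows by induction on subtree height, using that whether a child $u$ is satisfied depends only on (a) whether it had a same-color neighbor inside $T_u$ or (b) whether $v$ itself is a same-color neighbor, both of which are captured by the passed-up signature. For complexity, each node carries $O(rb)$ signatures per fixed $(c,s)$, and folding in a child costs $O((rb)^2)$, so the total running time is $O(t\cdot r^{2}b^{2})=O(n^{5})$, which is polynomial.

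The main obstacle, and the place where care is needed, lies in modelling the interaction between $v$ and its children correctly: $v$ must be allowed to simultaneously serve as the required friend for many same-color children while itself being rescued by just any one of them, an empty $v$ must be allowed but then forces every non-empty child to be self-satisfied, and the root has to be handled as a special case because there is no parent above it. Once the semantics of the signature $(c,s,r',b')$ are pinned down precisely, the transition rules and the final acceptance condition fall out mechanically, and the running time bound follows from a standard subset-sum-style convolution over children.
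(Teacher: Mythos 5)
Your proposal is correct and matches the paper's proof in all essentials: both root the tree and run a bottom-up dynamic program whose state records the content at the current node, the red/blue agent counts in the subtree, and a flag for whether the node already has a same-color neighbor below it (so that a parent may still rescue it), folding in children one at a time via a subset-sum-style convolution. The only cosmetic differences are that the paper additionally carries a (redundant) count of empty nodes and fixes the node's color at the end of the child-merging phase rather than at the start.
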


\begin{proof}
Pick an arbitrary node $v_{\text{root}}$ to be the root of $G$.
For each node $v\in V$, let $\tree(v)$ be the set of descendants of $v$, including $v$ itself.
For each $v$, we fill out a table $\tau_v$, which contains an entry $\tau_v(C,n_B,n_R,n_E,q)$ for each tuple $(C,n_B,n_R,n_E,q)$, where
\begin{itemize}
\item $C\in\{\text{\em blue}, \text{\em red}, \text{\em empty}\}$;
\item $n_B,n_R,n_E\in\{0,1,\dots,n\}$;
\item $q\in\{\text{\em yes}, \text{\em no}\}$.
\end{itemize}
The number of entries in each table is $6(n+1)^3$.
The value of each entry is either \emph{true} or \emph{false}.
Specifically, $\tau_v(C,n_B,n_R,n_E,q) = \true$ if and only if there exists an assignment of a subset of agents to the nodes in $\tree(v)$ satisfying the following conditions:
\begin{enumerate}
\item If $C=\ept$, then node $v$ is empty; otherwise, it is assigned to an agent of color $C$.
\item There are $n_B$ blue agents, $n_R$ red agents, and $n_E$ empty nodes in $\tree(v)$.
\item If $C\in\{\blue,\red\}$, then $q=yes$ if and only if the agent in node $v$ has at least one child of the same color.
\item Every node in $\tree(v)$ different from $v$ has at least one neighbor of the same color.
\end{enumerate}
An assignment in which every agent receives a positive utility exists if and only if in the table $\tau_{v_{\text{root}}}$ of the root node $v_{\text{root}}$, there exists $(C,n_B,n_R,n_E,q)$ such that $\tau_{v_{\text{root}}}(C,n_B,n_R,n_E,q) = \true$, $n_B = b$, $n_R = r$, and if $C\in\{\blue,\red\}$ then $q = yes$.

The tables for the leaf nodes can be filled in trivially.
We now show how to fill the table $\tau_v$ of each $v\in V$ given the tables of its children.
If $v$ has $L$ children $w_1,\dots,w_L$, we construct intermediate tables $\theta_v^0,\theta_v^1,\dots,\theta_v^L$.
Each table $\theta_v^i$ takes parameters $(n_B,n_R,n_E,q_B,q_R,\widehat{q_B},\widehat{q_R})$.
The entry of the table is set to $\true$ if it is possible to place agents in the first $i$ subtrees so that the following conditions hold: there are a total of $n_B$ blue agents, $n_R$ red agents, and $n_E$ empty nodes; $q_B$ (resp., $q_R$) indicates whether there is at least one blue (resp., red) agent among the first $i$ children of $v$; $\widehat{q_B}$ (resp., $\widehat{q_R}$) indicates whether there is at least one blue (resp., red) agent among the first $i$ children of $v$ who does not have a blue (resp., red) child.
The table $\theta_v^0$ can be filled in trivially: only the entry $\theta_v^0(0,0,0,no,no,no,no)$ is set to \emph{true}.
By combining the tables $\theta_v^{i-1}$ and $\tau_{w_i}$, we can fill in the table $\theta_v^i$ in polynomial time.
Specifically, we set the entry $\theta_v^i(n_B,n_R,n_E,q_B,q_R,\widehat{q_B},\widehat{q_R})$ to \emph{true} if and only if there exist entries $\theta_v^{i-1}(n_B',n_R',n_E',q_B',q_R',\widehat{q_B}',\widehat{q_R}')$ and $\tau_{w_i}(C'',n_B'',n_R'',n_E'',q'')$ such that both are \emph{true} and the following three conditions hold:
\begin{enumerate}
\item $n_B'+n_B''=n_B$, $n_R'+n_R''=n_R$, and $n_E'+n_E''=n_E$.
\item $q_B = yes$ if and only if $q_B' = yes$ or $C'' = blue$. Analogously, $q_R = yes$ if and only if $q_R' = yes$ or $C'' = red$.
\item $\widehat{q_B} = yes$ if and only if (i) $\widehat{q_B}' = yes$ or (ii) $C'' = blue$ and $q''= no$. Analogously, $\widehat{q_R} = yes$ if and only if (i) $\widehat{q_R}' = yes$ or (ii) $C'' = red$ and $q''= no$.
\end{enumerate}
The table $\theta_v^L$ can then be used to fill in $\tau_v$ in polynomial time.
Specifically, we set the entry $\tau_v(C,n_B,n_R,n_E,q)$ to $\true$ if and only if there exists an entry $\theta_v^L(n_B',n_R',n_E',q_B',q_R',\widehat{q_B}',\widehat{q_R}')$ which has been set to $\true$ and such that the following conditions hold:
\begin{enumerate}
\item If $C=\blue$, then $n_B=n_B'+1$, $n_R=n_R'$, and $n_E=n_E'$. Else, if $C=\red$, then $n_B=n_B'$, $n_R=n_R'+1$, and $n_E=n_E'$. Finally, if $C=\ept$, then $n_B=n_B'$, $n_R=n_R'$, and $n_E=n_E'+1$.
\item If $C=\blue$, then $q=q_B'$. Else, if $C=\red$, then $q=q_R'$.
\item If $\widehat{q_B}' = yes$, then $C=\blue$. Similarly, if $\widehat{q_R}' = yes$, then $C=\red$.
\end{enumerate}
This concludes the proof.
\end{proof}

Observe that for any topology, an assignment in which at least half of the agents are positive is guaranteed to exist and can be easily found by using depth-first search for the majority type.

\begin{proposition}
\label{prop:positive-half}
For any $n\ge 3$, there exists a polynomial-time algorithm that computes an assignment in which at least $\lceil n/2\rceil$ agents receive a positive utility.
\end{proposition}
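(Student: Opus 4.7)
The plan is to exploit the connectedness of the topology together with a depth-first search (DFS) to place all agents of the majority type in a connected monochromatic block, which automatically yields positive utility for each of them.

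First, I would identify the majority type; without loss of generality, assume it is red, so that $r \geq \lceil n/2 \rceil$. Since $n \geq 3$, this gives $r \geq 2$. Next, pick an arbitrary root node of the topology $G$ and run DFS from it; let $v_1, v_2, \ldots, v_n$ (or the first $n$ nodes if $|V|>n$) denote the order in which nodes are discovered. The key structural observation is that for any prefix $v_1, \ldots, v_r$ of the DFS ordering, the induced subgraph in the DFS tree is connected: each newly discovered node $v_j$ is attached, via a tree edge, to some ancestor $v_i$ with $i<j$.

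I would then place the $r$ red agents on $v_1, \ldots, v_r$ and the $b$ blue agents arbitrarily on any $b$ of the remaining nodes. Because $\{v_1, \ldots, v_r\}$ induces a connected subgraph of $G$ containing at least two nodes, every red agent has at least one red neighbor, and therefore receives utility strictly greater than $0$. Consequently, at least $r \geq \lceil n/2 \rceil$ agents are positive, as required.

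Finally, I would observe that the whole procedure runs in polynomial time: computing agent-type counts is trivial, DFS is linear in $|V|+|E|$, and the assignment step is linear in $n$. The only subtlety is the base-case check $r \geq 2$, which is guaranteed by $n \geq 3$; no other case analysis is needed, so I do not anticipate any real obstacle in this proof.
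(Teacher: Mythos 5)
Your proposal is correct and follows essentially the same approach as the paper: identify the majority type and assign its agents along a depth-first search order, so that the connectedness of every DFS prefix guarantees each majority-type agent a same-type neighbor. The paper's proof is a terser version of exactly this argument.
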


\begin{proof}
Assume without loss of generality that there are at least as many red as blue agents, so there are at least $\lceil n/2\rceil\ge 2$ red agents. 
Starting from an arbitrary node of the topology, we first assign the red agents to nodes as we perform a depth-first search, and then assign the blue agents to any subset of the remaining nodes. 
Since the topology is connected, every red agent will have at least one red neighbor, meaning that at least $\lceil n/2\rceil$ agents receive a positive utility.
\end{proof}

The bound $\lceil n/2\rceil$ is tight when the topology is a star and there are $\lceil n/2\rceil$ red and $\lfloor n/2\rfloor$ blue agents.

Next, we show that when every node has degree at least $2$ and the number of agents is equal to the number of nodes, it is possible to give every agent a positive utility.
Note that the latter condition is also necessary---for the topology given in \Cref{fig:no-positive}, if there are three red and three blue agents (so one node is left unoccupied), it is easy to see that no assignment makes every agent positive.

\begin{figure}
\centering
\begin{tikzpicture}[scale=0.7]
\draw (7,7.73) -- (5,4.27) -- (4,6) -- (8,6) -- (7,4.27) -- (5,7.73) -- (7,7.73);
\draw[fill=white] (6,6) circle [radius = 0.3];
\draw[fill=white] (7,7.73) circle [radius = 0.3];
\draw[fill=white] (5,7.73) circle [radius = 0.3];
\draw[fill=white] (4,6) circle [radius = 0.3];
\draw[fill=white] (5,4.27) circle [radius = 0.3];
\draw[fill=white] (7,4.27) circle [radius = 0.3];
\draw[fill=white] (8,6) circle [radius = 0.3];
\end{tikzpicture}
\caption{Example showing that \Cref{thm:deg-2-positive} does not hold when the number of nodes is greater than the number of agents. There are three red and three blue agents. No matter how the agents are placed, at least one of them will receive utility~$0$.}
\label{fig:no-positive}
\end{figure}
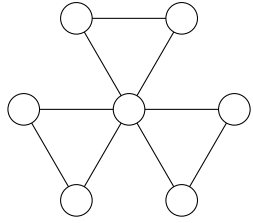

\begin{theorem}
\label{thm:deg-2-positive}
Suppose that every node in the topology has degree at least $2$, the number of agents is equal to the number of nodes, and there are at least two agents of each type.
Then there exists an assignment such that every agent receives a positive utility.
\end{theorem}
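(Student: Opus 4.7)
The plan is to reduce the theorem to a graph-partition problem: find $V = V_R \sqcup V_B$ with $|V_R| = r$ and $|V_B| = b$ such that neither induced subgraph $G[V_R]$ nor $G[V_B]$ has an isolated vertex. Placing the red agents on $V_R$ and the blue agents on $V_B$ then gives each agent a same-color neighbor, hence positive utility. Since $r, b \geq 2$, it in fact suffices to find a partition in which both parts induce \emph{connected} subgraphs, because any connected graph on at least two vertices has no isolated vertex.

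The main tool I plan to use is the classical Gy\H{o}ri--Lov\'asz theorem: every $2$-connected graph on $n$ vertices admits a partition into two connected subgraphs of any prescribed positive sizes summing to $n$, with the option of specifying a distinguished vertex in each part. If $G$ is itself $2$-connected, this immediately delivers the desired partition. The harder case is when $G$ has cut vertices, where I will use the block-cut tree of $G$. A key structural observation is that every leaf block $B$ of this tree must have at least three vertices: if $B$ were a single edge, then the non-cut endpoint would have degree $1$ in $G$, contradicting $\delta(G) \geq 2$. Hence every leaf block is $2$-connected with $|V(B)| \geq 3$.

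Fix a leaf block $B$ with unique cut vertex $c$, and set $m = |V(B)| - 1 \geq 2$. When $\min(r, b) \leq m$, I apply Gy\H{o}ri--Lov\'asz inside $B$ to split $V(B)$ into a connected subset $X$ of size $\min(r, b)$ that avoids $c$, and a connected subset $Y \ni c$ of size $m + 1 - \min(r, b)$. I then color $X$ with the minority color and $Y \cup (V(G) \setminus V(B))$ with the majority color. Because $c$ is a cut vertex, no edge joins $V(B) \setminus \{c\}$ with $V(G) \setminus V(B)$, which makes it routine to verify that neither side contains an isolated vertex: $G[X]$ is connected with $|X| \geq 2$; the majority part is connected through $c$ (which always has a neighbor outside $B$); and every vertex outside $B$ has degree at least $2$ with all its $G$-neighbors on the majority side.

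The main obstacle is the remaining subcase $\min(r, b) > m$. Here my plan is to proceed by induction on $n$: assign all of $V(B)$ to the minority class and recurse on $G - V(B)$ with residual sizes $(\min(r, b) - |V(B)|,\ \max(r, b))$. The subtlety is that $G - V(B)$ may become disconnected, splitting into the subgraphs formerly attached to $c$ via other blocks of $G$, and may contain vertices of degree below $2$; so the inductive statement has to be reformulated to treat each component separately, with $c$ serving as an already-committed minority ``anchor'' to which new minority vertices can be attached. Exploiting the fact that $G$ always has at least two leaf blocks whenever it has a cut vertex provides additional flexibility in the choice of $B$ to avoid pathological configurations. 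Completing this case analysis cleanly is where I expect the main technical difficulty to lie.
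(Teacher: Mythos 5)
Your reduction to finding a partition $V = V_R \sqcup V_B$ with prescribed sizes and no isolated vertex in either induced subgraph is sound, and the cases you actually complete are fine: if $G$ is $2$-connected, Gy\H{o}ri--Lov\'asz finishes immediately, and if $G$ has a cut vertex and $\min(r,b)\le |V(B)|-1$ for a leaf block $B$ with cut vertex $c$, your coloring works because $c$ separates $V(B)\setminus\{c\}$ from the rest. The genuine gap is the remaining subcase $\min(r,b) > |V(B)|-1$, and you say so yourself. Assigning all of $V(B)$ to the minority class and recursing on $G - V(B)$ destroys both hypotheses of the statement you are inducting on: $G-V(B)$ can be disconnected, and the former neighbors of $c$ can drop to degree $1$ (e.g.\ a triangle or a pendant cycle attached at $c$ leaves behind a path). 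Closing the induction therefore requires a strengthened invariant --- roughly, a family of components each carrying designated boundary vertices that already have an external anchor of a committed color, together with a rule for distributing the residual minority count across those components --- and small components hanging off $c$ force case distinctions you have not carried out. Since the easy cases are exactly those where one color class fits inside a single $2$-connected piece, this deferred analysis is where the content of the theorem lies; as written, the proposal is a plan rather than a proof.

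For contrast, the paper avoids all of this structure theory with a short local-improvement argument: start from an arbitrary assignment, take an agent $i$ with utility $0$ (say blue), take a \emph{longest} path from $i$ to another blue agent $j$, let $k$ be the last red agent on that path before $j$, and swap $i$ with $k$. The maximality of the path together with the minimum-degree-$2$ hypothesis is precisely what is needed to verify that the number of positive agents strictly increases, so iterating terminates at an assignment in which everyone is positive (at the cost of not being polynomial-time, since it uses longest paths). It is telling that the paper's constructive version in \Cref{thm:deg-2-positive-algo} does resort to a decomposition in the spirit of yours and is correspondingly long --- further evidence that the case analysis you postponed is where the real work is.
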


\begin{proof}
Consider an arbitrary assignment $\vv$.
If every agent is already positive, we are done, so assume that there is an agent $i$ with utility $0$.
Without loss of generality, $i$ is a blue agent.
Among all paths from $i$ to another blue agent going through only red agents in between, consider one with maximum length---suppose that the path goes to agent $j$.
Since there are at least two blue agents, such a path must exist; moreover, since $i$ has utility $0$, the path contains at least one red agent.
Let $k$ be the last red agent on the path before reaching $j$.
Swap $i$ and $k$.

We claim that in the resulting assignment $\vv'$, the number of positive agents increases by at least $1$; by applying such swaps repeatedly, we will reach an assignment in which all agents are positive.
To establish the claim, it suffices to show that $i$, $k$, as well as any agent adjacent to either of them are positive in $\vv'$.
Since $i$ has utility $0$ in $\vv$, she has at least two red neighbors in $\vv$, so $k$ is positive in $\vv'$.
Moreover, $i$ is adjacent to $j$ in $\vv'$ and therefore becomes positive.
Any other red agent on the path remains positive, and all agents adjacent to $k$ in $\vv'$ are red (besides possibly $i$, if $k$ is the only red agent on the path) and are therefore positive.
Finally, consider any red agent $\ell$ adjacent to $i$ in $\vv'$ not lying on the path.
Since every node has degree at least $2$, agent $\ell$ must have a neighbor $m\neq i$ (possibly $m=j$).
%If $\ell$ is adjacent to $k$ in $\vv'$, then $\ell$ is positive since $k$ is red.
If $m$ is a blue agent (in particular, $m$ is not the same as any node on the path besides $j$), we obtain a longer path from $i$ to $m$ in $\vv$ than the original longest path, a contradiction.
Hence $m$ must be red, and $\ell$ is positive in $\vv'$, proving the claim.
\end{proof}

Since the longest path problem is known to be NP-hard \citep[p.~213]{GareyJo79}, the proof of \Cref{thm:deg-2-positive} does not give rise to a polynomial-time algorithm for computing a desired assignment.
In \Cref{thm:deg-2-positive-algo}, we present an inductive approach that is more involved but leads to an efficient algorithm.

For our final result of this section, we show that maximizing the social welfare and maximizing the number of positive agents can be conflicting goals.
Recall that an assignment has nonzero egalitarian welfare if and only if it makes every agent positive.
To avoid confusion, we will refer to our main notion of social welfare (i.e., the sum of the agents' utilities) as \emph{utilitarian welfare}.

\begin{proposition}
\label{prop:util-egal}
There exists a Schelling instance in which the maximum egalitarian welfare is nonzero but the egalitarian welfare of every assignment that maximizes the utilitarian welfare is zero.
\end{proposition}

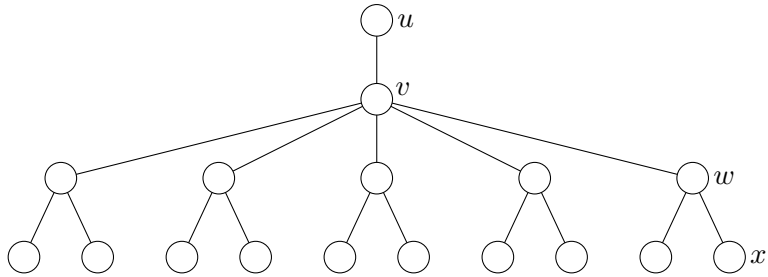
\begin{figure}
\centering
\begin{tikzpicture}[scale=0.7]
\draw (6,9) -- (6,6) -- (6.7,4.5);
\draw (6,6) -- (5.3,4.5);
\draw (3,6) -- (6,7.5) -- (9,6);
\draw (0,6) -- (6,7.5) -- (12,6);
\draw (2.3,4.5) -- (3,6) -- (3.7,4.5);
\draw (8.3,4.5) -- (9,6) -- (9.7,4.5);
\draw (11.3,4.5) -- (12,6) -- (12.7,4.5);
\draw (-0.7,4.5) -- (0,6) -- (0.7,4.5);
\draw[fill=white] (6,9) circle [radius = 0.3];
\draw[fill=white] (6,7.5) circle [radius = 0.3];
\draw[fill=white] (6,6) circle [radius = 0.3];
\draw[fill=white] (6.7,4.5) circle [radius = 0.3];
\draw[fill=white] (5.3,4.5) circle [radius = 0.3];
\draw[fill=white] (9,6) circle [radius = 0.3];
\draw[fill=white] (8.3,4.5) circle [radius = 0.3];
\draw[fill=white] (9.7,4.5) circle [radius = 0.3];
\draw[fill=white] (12,6) circle [radius = 0.3];
\draw[fill=white] (11.3,4.5) circle [radius = 0.3];
\draw[fill=white] (12.7,4.5) circle [radius = 0.3];
\draw[fill=white] (3,6) circle [radius = 0.3];
\draw[fill=white] (3.7,4.5) circle [radius = 0.3];
\draw[fill=white] (2.3,4.5) circle [radius = 0.3];
\draw[fill=white] (0,6) circle [radius = 0.3];
\draw[fill=white] (0.7,4.5) circle [radius = 0.3];
\draw[fill=white] (-0.7,4.5) circle [radius = 0.3];
\node at (6.55,9) {$u$};
\node at (6.5,7.7) {$v$};
\node at (12.6,6) {$w$};
\node at (13.25,4.5) {$x$};
\end{tikzpicture}
\caption{Illustration for the proof of Proposition~\ref{prop:util-egal}.}
\label{fig:util-egal}
\end{figure}

\begin{proof}
Consider the topology in \Cref{fig:util-egal}, and assume that there are $2$ blue and $15$ red agents (so no node can be left empty).
The only way to achieve nonzero egalitarian welfare is to assign the blue agents to $u$ and $v$, yielding utilitarian welfare $14.5$.
However, assigning the blue agents to $w$ and $x$ results in a higher utilitarian welfare of $91/6\approx 15.17$.
\end{proof}

The utilitarian welfare gap in the example in \Cref{prop:util-egal} is rather small. 
However, one can easily modify the example by adding more three-node gadgets as subtrees of the node $v$ to obtain instances wherein the utilitarian welfare of the unique assignment yielding a positive egalitarian welfare is an additive factor of $\Theta(n)$ lower than the maximum possible utilitarian welfare.
Moreover, even assignments that maximize the utilitarian welfare can differ significantly in terms of egalitarian welfare. In Propositions~\ref{prop:egal-zeropos} and \ref{prop:egal-highlow}, we present examples illustrating the possibility that one such assignment has a positive egalitarian welfare while another one has zero, or that two such assignments have positive egalitarian welfare differing by a multiplicative factor of $\Theta(n)$.

\section{Conclusion and Future Work}

In this paper, we have studied questions regarding welfare guarantees and complexity in Schelling segregation.
Several of our findings are positive: An assignment with high social welfare always exists and can be found efficiently, and the welfare of assignments satisfying most optimality notions are at most a linear factor away from the maximum social welfare in the worst case.
Furthermore, even though an assignment yielding a positive utility to every agent may not exist, the existence can be guaranteed when every node in the topology has degree at least $2$, a realistic assumption in well-connected metropolitan areas.
By contrast, computing an assignment that maximizes the social welfare or satisfies any of the optimality notions is NP-hard, and assignments maximizing the (utilitarian) social welfare can differ significantly in terms of egalitarian welfare.

A number of interesting directions remain from our work.
On the technical side, it would be useful to close the gap on the price of Pareto optimality, which we conjecture to be $\Theta(n)$, as well as to characterize the topologies for which an assignment such that every agent receives a positive utility always exists.
Another question is whether we can obtain in polynomial time a better approximation of social welfare than the factor of $2$ in \Cref{thm:derandomized}, or whether there is in fact an inapproximability result.
From a more conceptual perspective, one could try to extend our results to a model with more than two types of agents or more complex friendship relations (e.g., friendship relations defined by a social network, \citeauthor{ElkindGaIg19}, \citeyearR{ElkindGaIg19}) 
or modified utility functions \citep{KanellopoulosKyVo20}.
Questions concerning the convergence behavior in best-response dynamics also remain open: do such dynamics always converge to an optimal assignment?
Finally, studying our new optimality notions from \Cref{sec:notions} in related settings such as hedonic games, especially when agents are partitioned into types, or optimality notions derived from other known welfare measures, may lead to intriguing discoveries as well.

\acks{We would like to thank Jiarui Gan, Pascal Lenzner, and Louise Molitor for interesting discussions, and the anonymous reviewers of AAAI 2021 and JAIR for helpful comments.
}

\vskip 0.2in
\bibliography{main}
\bibliographystyle{theapa}

\appendix
\section{Additional Results}

In the upper bound part of \Cref{thm:uvo-price}, we showed that any UVO assignment has social welfare at least $1/2$. 
Here we establish an improved bound of $1$, albeit with a much longer proof.

\begin{theorem}
\label{thm:UVO-welfare-1}
When $n\ge 3$, any UVO assignment has social welfare at least $1$.
\end{theorem}

\begin{proof}
By an argument similar to that in the upper bound part of \Cref{thm:uvo-price}, it suffices to consider the case where the number of agents is equal to the number of nodes.
Let $\vv$ be a UVO assignment.
If all agents receive nonzero utility, then since the least possible positive utility is $1/(n-1)$, the social welfare would be at least $n/(n-1) > 1$.
Hence we may assume without loss of generality that there is a red agent with utility $0$, and that all blue agents receive a positive utility.

Decompose the topology into maximal monochromatic components.
We claim that there cannot be a set of red components and a set of blue components such that the two sets have the same total number of nodes and together they do not cover all nodes; call this claim (*).
To see why (*) is true, assume for contradiction that two such sets exist.
We swap all agents in the set of red components with those in the set of blue components to obtain an assignment $\vv'$.
The utility of each swapped red agent in $\vv'$ is at least that of the blue agent whom she replaces in $\vv$; an analogous statement holds for the swapped blue agents in $\vv'$.
Moreover, an unswapped agent who is adjacent to at least one swapped agent receives a strictly higher utility in $\vv'$ than in $\vv$.
Hence $\vv$ is not UVO, a contradiction that establishes (*).

Next, assume for contradiction that $\SW(\vv) < 1$, and let $s$ be the number of maximal monochromatic components in the topology.
Call an edge ``monochromatic'' if it connects two agents of the same color.
Since a component with $t$ nodes has at least $t-1$ edges, the total number of monochromatic edges is at least $n-s$.
A monochromatic edge generates a utility of at least $1/(n-1)$ for each of the two agents adjacent to it, so the generated utility is at least $2/(n-1)$ for each edge.
Since $\SW(\vv) < 1$, this means that the number of monochromatic edges is less than $(n-1)/2$, and therefore at most $(n-2)/2$.
Thus, we have $n-s\leq (n-2)/2$, which implies that $n\leq 2s-2$.

Let $k\ge 1$ be the number of red components of size $1$, i.e., the number of red agents with utility $0$.
By (*), the size of any blue component is at least $k$.
Suppose the smallest blue component has size at least $k+1$.
Then, considering the $k$ singleton red components and the smallest blue component, there are $k+1$ components with a total of at least $2k+1$ nodes.
Letting $n_0$ and $s_0$ denote the number of nodes and components considered so far, we have $n_0 > 2s_0-2$.
Since every other component has size at least $2$, we also have $n > 2s-2$, a contradiction with $n\le 2s-2$.
So the smallest blue component must have size $k$.
If there is any other component, we are again done by (*).
Hence, the topology consists exactly of $k$ red components of size $1$ and one blue component of size $k$.
In particular, $k\ge 2$.

We now show that the blue component of size $k$ is a star, i.e., all but one blue agents have exactly one blue neighbor.
Assume that at least two blue agents have more than one blue neighbor each.
Each of these two agents receives utility at least $\frac{2}{k+2}$, while every other blue agent receives at least $\frac{1}{k+1}$.
Hence $\SW(\vv)\ge 2\cdot\frac{2}{k+2} + (k-2)\cdot\frac{1}{k+1} = \frac{k(k+4)}{(k+1)(k+2)}$, which is at least $1$ because $k\ge 2$.
So the blue component is a star, and $\SW(\vv)\ge \frac{k-1}{2k-1} + (k-1)\cdot\frac{1}{k+1} = \frac{3k(k-1)}{(k+1)(2k-1)}$, which is at least $1$ whenever $k\ge 4$.

It remains to consider the cases $k=2$ and $k=3$.
Since $\SW(\vv) < 1$, each blue agent must have at least one red neighbor.
Moreover, between the two blue agents with one blue neighbor, at least one must have more than one red neighbor.
Let $i$ be such a blue agent, $j$ be her blue neighbor, $\ell$ be a red neighbor of $j$, and $m\neq \ell$ be a red neighbor of $i$.
We swap $i$ and $\ell$ to obtain an assignment $\vv'$.
From $\vv$ to $\vv'$, $j$ receives the same utility, $m$ receives a strictly higher utility, $i$ receives a higher utility than $\ell$'s utility in $\vv$ (which is $0$), and $\ell$ receives at least as much utility as $i$ does in $\vv$.
It follows that $\vv$ is not UVO, a final contradiction which completes the proof.
\end{proof}

Next, we present an efficient algorithm for computing an assignment that gives every agent a positive utility when every node has degree at least $2$; a shorter proof that such an assignment exists can be found in \Cref{thm:deg-2-positive}.

\begin{theorem}
\label{thm:deg-2-positive-algo}
Suppose that every node in the topology has degree at least $2$, the number of agents is equal to the number of nodes, and there are at least two agents of each type.
Then, it is possible to compute an assignment in which every agent receives a positive utility in polynomial time.
\end{theorem}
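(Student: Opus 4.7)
The key observation is that the existence argument in \Cref{thm:deg-2-positive} shows more than it claims: for \emph{any} intermediate assignment $\vv$ in which some agent has utility $0$, it exhibits a specific pairwise swap that strictly increases the number of positive agents. The algorithmic task is therefore not to replicate the longest-path search used there, but merely to find \emph{some} swap that strictly increases the count of positive agents, which can be accomplished by brute-force enumeration.

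My proposed algorithm starts from an arbitrary assignment $\vv$. At each iteration, it tests every one of the $\binom{n}{2}$ pairwise swaps of agents: if every agent in $\vv$ is positive, it returns $\vv$, and otherwise, by \Cref{thm:deg-2-positive}, at least one candidate swap strictly increases the number of positive agents, and the algorithm performs any such swap before repeating. Since the number of positive agents is a non-negative integer bounded above by $n$ that strictly increases at every iteration, the loop runs at most $n$ times. Note that swaps preserve the hypotheses (the graph is unchanged and the number of agents of each color is unchanged), so the guarantee of \Cref{thm:deg-2-positive} continues to apply throughout the execution.

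Each iteration requires only polynomial work: for every candidate swap, one recomputes the utilities of the $O(n)$ agents whose neighborhoods can change and counts the positives, taking $O(n^2)$ time per swap and hence $O(n^4)$ time per iteration. The total running time is therefore $O(n^5)$. The main obstacle, if one insists on the ``more involved'' inductive algorithm that the paper promises, is to construct the assignment directly by induction on $n$---for instance by peeling off small substructures such as chains of degree-$2$ vertices along a cycle of $G$ or leaf blocks of its block-cut decomposition, and then extending a recursively obtained solution---while maintaining both the minimum-degree-$2$ hypothesis and the at-least-two-agents-per-color hypothesis on the reduced instance, which requires delicate casework that the brute-force approach above neatly sidesteps.
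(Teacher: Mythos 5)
Your proof is correct, but it takes a genuinely different route from the paper's. The paper proves \Cref{thm:deg-2-positive-algo} by a direct inductive construction: it repeatedly deletes edges joining two nodes of degree at least $3$, distinguishes ``primary'' and ``secondary'' nodes, builds a meta-graph $H$ of primary nodes, colors the cycles and paths hanging off a carefully chosen primary node in a prescribed order, and handles the case where the topology splits into two components via case analysis on $(|C_1|,|C_2|,r,b)$. You instead observe that the existence proof of \Cref{thm:deg-2-positive} is locally constructive: whenever some agent has utility $0$, it certifies that \emph{some} pairwise swap strictly increases the number of positive agents, and the longest path is used only to exhibit one such swap, not required to find it. Exhaustive search over the $\binom{n}{2}$ swaps therefore always succeeds, the number of positive agents is a monotone integer potential bounded by $n$, and the hypotheses of \Cref{thm:deg-2-positive} are preserved across iterations since swaps change neither the topology nor the color counts --- a point you rightly make explicit. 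This neatly sidesteps the NP-hardness of longest path, which the paper cites as the reason its existence proof ``does not give rise to a polynomial-time algorithm''; the observation is simply that one needs only the existence of an improving swap, not the particular swap the proof constructs. What the paper's heavier machinery buys is efficiency: the authors note that their inductive algorithm runs in time linear in the input size, whereas your local search runs in roughly $O(n^5)$ time. Since the theorem claims only polynomial time, your argument fully suffices.
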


\begin{proof}
We present an inductive approach which inherently gives rise to a polynomial-time algorithm.
First, if there is an edge connecting two nodes of degree at least $3$, deleting it still leaves a topology in which every node has degree at least $2$.
The topology may stay connected or break into two connected components.
We first show how to deal with a connected topology in which no edge connects two nodes of degree at least $3$, and specify later how to proceed when the topology breaks into two components upon the removal of an edge. 
Assume that we have a connected topology such that no edge connects two nodes of degree at least $3$.
If the topology is a cycle, a desired assignment can be easily found, so assume otherwise.

Call the nodes of degree at least $3$ ``primary nodes'', and the remaining nodes (which have degree $2$) ``secondary nodes''.
Note that no edge connects two primary nodes and there exists at least one primary node due to our previous assumptions.
In addition, each secondary node belongs either to a path connecting two primary nodes or to a cycle going from a primary node back to itself.
We prove our claim for the class of graphs satisfying these conditions, but more generally without the assumption that primary nodes have degree at least $3$ (so they may have degree $1$ or $2$).
Nevertheless, the fact that the primary nodes under our original assumptions can be easily identified by their degree will be useful for constructing an efficient algorithm.
From this point on, primary nodes will remain primary throughout the procedure regardless of their degree.

Consider the ``meta-graph'' $H$ whose nodes correspond to the primary nodes of our topology $G$, where two nodes in $H$ are connected by an edge if and only if there exists a path connecting the two corresponding primary nodes not going through any other primary node in $G$.
Since $G$ is connected, so is $H$ (note that $H$ may consist of a single node).
Let $v$ be a primary node in $G$ such that removing the corresponding node in $H$ leaves $H$ connected---for example, any leaf in a spanning tree of $H$ satisfies this property.

We will color certain nodes in $G$ blue in the following order.
Start with a cycle connecting $v$ back to itself (if there is any), and color nodes excluding $v$ in sequence starting from a node adjacent to $v$; then move on to the next cycle if there is another left.
The only exception is if the total amount of nodes to be colored blue only allows us to color one more node when we are about to start coloring a new cycle---in this case, we color $v$ blue instead of the first node of a new cycle.
Whenever we run out of nodes to be colored blue, we color all of the remaining nodes red.
If we have colored all nodes in cycles adjacent to $v$ and there are still nodes left to be colored blue, we color $v$ blue, followed by secondary nodes on paths adjacent to $v$ (call this set of secondary nodes $A$); for each such path, we color nodes closer to $v$ before those further away from $v$.
If we run out of blue nodes, color all remaining nodes red. 
In case we have also colored all nodes in $A$ and removed them, the remaining topology is again connected due to our choice of $v$.
If we have only one node left to be colored blue, color one of the primary nodes adjacent to a secondary node in $A$ blue, and all of the remaining nodes red; any remaining primary node is still adjacent to at least one secondary node.
Else, there are at least two nodes left to be colored blue, and we apply induction on the remaining topology.
Note that if we only have one primary node $v$ left, $G$ is a union of cycles that only intersect each other at $v$, and the same procedure still applies.

We now show how to proceed when, upon removing an edge connecting two nodes of degree at least $3$, the topology breaks into two connected components $C_1$ and $C_2$.
We try to allocate an appropriate number of red and blue agents to each component, and recurse on the two smaller problems.
Assume without loss of generality that $|C_1|\le |C_2|$ and $r\le b$.
Since every node in the remaining (disconnected) topology still has degree at least $2$, we must have $|C_1|\ge 3$.
Assume first that $|C_1|\ge 4$.
If $r\ge 4$, we allocate two red and two blue agents to each component, and the remaining agents arbitrarily so that the number of agents is equal to the number of nodes in each subproblem---this ensures that the subproblems satisfy the conditions of the original problem.
If $r=2$, or if $r=3$ and $|C_2|\ge 5$, we simply allocate all red agents to $C_2$.
The remaining case is that $r=3$, $|C_1|=|C_2| = 4$, and $b=5$.
In this case, suppose that the removed edge is adjacent to node $x$ in $C_1$. 
We assign the blue agents to $C_2\cup\{x\}$, and the red agents to $C_1\setminus\{x\}$.
Since each node in $C_1$ is adjacent to at least one other node in $C_1$ besides $x$, all agents are positive.

Consider now the case where $|C_1| = 3$.
If one of the types has three or at least five agents, we may allocate three agents of that type to $C_1$.
Hence, assume that both types consist of either two or four agents.
The only two possibilities are then $(|C_1|,|C_2|,r,b) = (3,3,2,4)$ and $(3,5,4,4)$.
Both cases can be handled similarly to the case $(4,4,3,5)$ in the previous paragraph.

Since each step of our procedure removes at least one node or edge, by following the procedure, we obtain a polynomial-time algorithm that computes a desired assignment.
\end{proof}

In fact, the algorithm in \Cref{thm:deg-2-positive-algo} runs in time linear in the size of the input.
Indeed, finding a spanning tree of $H$ can be done by breadth-first search, and all other steps of the algorithm take time $O(m)$, where $m$ denotes the number of edges in $G$.

Finally, we continue our discussion after Proposition~\ref{prop:util-egal} by presenting further examples relating egalitarian and utilitarian welfare.

\begin{proposition}
\label{prop:egal-zeropos}
There exists a Schelling instance in which one assignment maximizing utilitarian welfare also maximizes egalitarian welfare, while another such assignment has zero egalitarian welfare.
\end{proposition}

\begin{figure}[!h]
\centering
\begin{tikzpicture}[scale=0.7]
\draw (5.3,9) -- (6,7.5);
\draw (6.7,9) -- (6,7.5);
\draw (4,6) -- (6,7.5) -- (8,6);
\draw (0,6) -- (6,7.5) -- (12,6);
\draw (3,4.5) -- (4,6) -- (4,4.5);
\draw (4,6) -- (5,4.5);
\draw (7,4.5) -- (8,6) -- (9,4.5);
\draw (8,6) -- (8,4.5);
\draw (11,4.5) -- (12,6) -- (13,4.5);
\draw (12,6) -- (12,4.5);
\draw (-1,4.5) -- (0,6) -- (1,4.5);
\draw (0,6) -- (0,4.5);
\draw[fill=white] (6.7,9) circle [radius = 0.3];
\draw[fill=white] (5.3,9) circle [radius = 0.3];
\draw[fill=white] (6,7.5) circle [radius = 0.3];
\draw[fill=white] (8,6) circle [radius = 0.3];
\draw[fill=white] (7,4.5) circle [radius = 0.3];
\draw[fill=white] (8,4.5) circle [radius = 0.3];
\draw[fill=white] (9,4.5) circle [radius = 0.3];
\draw[fill=white] (12,6) circle [radius = 0.3];
\draw[fill=white] (11,4.5) circle [radius = 0.3];
\draw[fill=white] (12,4.5) circle [radius = 0.3];
\draw[fill=white] (13,4.5) circle [radius = 0.3];
\draw[fill=white] (4,6) circle [radius = 0.3];
\draw[fill=white] (4,4.5) circle [radius = 0.3];
\draw[fill=white] (3,4.5) circle [radius = 0.3];
\draw[fill=white] (5,4.5) circle [radius = 0.3];
\draw[fill=white] (0,6) circle [radius = 0.3];
\draw[fill=white] (1,4.5) circle [radius = 0.3];
\draw[fill=white] (0,4.5) circle [radius = 0.3];
\draw[fill=white] (-1,4.5) circle [radius = 0.3];
\node at (4.65,9) {$u_1$};
\node at (7.35,9) {$u_2$};
\node at (6.5,7.7) {$v$};
\node at (12.6,6) {$w$};
\node at (13.7,4.5) {$x_2$};
\node at (10.3,4.5) {$x_1$};
\end{tikzpicture}
\caption{Illustration for the proof of Proposition~\ref{prop:egal-zeropos}.}
\label{fig:egal-zeropos}
\end{figure}
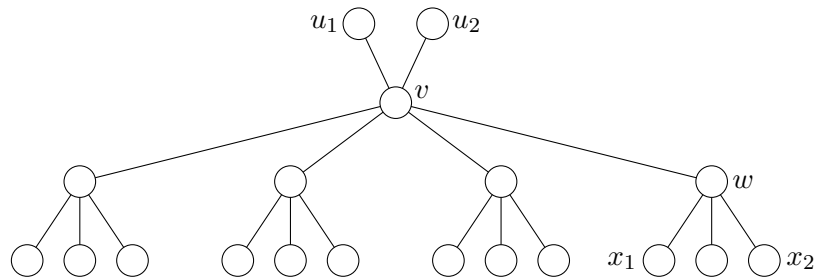

\begin{proof}
Consider the topology in \Cref{fig:egal-zeropos}, and assume there are $3$ blue and $16$ red agents (so no node can be left empty).
The only way to achieve nonzero egalitarian welfare is to assign the blue agents to $u_1$, $u_2$, and $v$, yielding utilitarian welfare $52/3$.
Assigning the blue agents to $w$, $x_1$, and $x_2$ results in the same utilitarian welfare, while the egalitarian welfare is then zero.
It can be verified that both assignments maximize the utilitarian welfare.
\end{proof}

\begin{proposition}
\label{prop:egal-highlow}
There exists a class of Schelling instances such that the ratio between the maximum and minimum egalitarian welfare among assignments that maximize utilitarian welfare is $\Theta(n)$. 
\end{proposition}

\begin{proof}
Let $k$ be a positive integer. 
We define the topology $G = (V,E)$ of a Schelling instance with node set $V$ given by
\[V = \{a_i\colon 1 \le i \le k\}\cup \{x_1,x_2,y_1,y_2\}\cup \{b_i^j\colon 1\le j \le 3, 1\le i \le k\}\] 
and edge set $E$ given by
\begin{itemize}
    \item $\{a_i,a_\ell\}\in E$ for $1\le i,\ell\le k$;
    \item $\{a_i, x_\ell\}\in E$ for $1\le i\le k$, $1\le \ell\le 2$;
    \item $\{x_1,y_1\},\{x_2,y_2\}\in E$;
    \item $\{y_j, b_i^j\}\in E$ for $1\le i\le k$, $1\le j \le 2$;
    \item $\{b_i^j,b_\ell^j\}\in E$ for $1\le i,\ell\le k$, $1\le j\le 3$;
    \item $\{b_i^j,b_i^3\}\in E$ for $1\le i\le k$, $1\le j\le 2$;
    \item no further edges are in $E$.
\end{itemize}
See \Cref{fig:egal-highlow} for an illustration for $k=3$.

\begin{figure}[!h]
\centering
\begin{tikzpicture}[scale=0.7]
\draw (2,8) -- (0,5) -- (2,2) -- (2,8);
\draw (4.5,6.4) -- (0,5) -- (4.5,3.5);
\draw (4.5,6.5) -- (2,8) -- (4.5,3.5);
\draw (4.5,6.5) -- (2,2) -- (4.5,3.5);
\draw (4.5,6.5) -- (7.5,6.5);
\draw (4.5,3.5) -- (7.5,3.5);
\draw (7.5,6.5) -- (10,8);
\draw (7.5,6.5) -- (10,2);
\draw (7.5,6.5) -- (12,5);
\draw (7.5,3.5) -- (14,7);
\draw (7.5,3.5) -- (14,1);
\draw (7.5,3.5) -- (16,4);
\draw (10,8) -- (10,2) -- (12,5) -- (10,8);
\draw (14,7) -- (14,1) -- (16,4) -- (14,7);
\draw (18,8) -- (18,2) -- (20,5) -- (18,8);
\draw (10,8) -- (18,8) -- (14,7);
\draw (12,5) -- (20,5) -- (16,4);
\draw (10,2) -- (18,2) -- (14,1);
\draw[fill=white] (2,8) circle [radius = 0.3];
\draw[fill=white] (2,2) circle [radius = 0.3];
\draw[fill=white] (0,5) circle [radius = 0.3];
\draw[fill=white] (4.5,6.5) circle [radius = 0.3];
\draw[fill=white] (4.5,3.5) circle [radius = 0.3];
\draw[fill=white] (7.5,6.5) circle [radius = 0.3];
\draw[fill=white] (7.5,3.5) circle [radius = 0.3];
\draw[fill=white] (10,8) circle [radius = 0.3];
\draw[fill=white] (10,2) circle [radius = 0.3];
\draw[fill=white] (12,5) circle [radius = 0.3];
\draw[fill=white] (14,7) circle [radius = 0.3];
\draw[fill=white] (14,1) circle [radius = 0.3];
\draw[fill=white] (16,4) circle [radius = 0.3];
\draw[fill=white] (18,8) circle [radius = 0.3];
\draw[fill=white] (18,2) circle [radius = 0.3];
\draw[fill=white] (20,5) circle [radius = 0.3];
\node at (1.3,8) {$a_1$};
\node at (-0.7,5) {$a_2$};
\node at (1.3,2) {$a_3$};
\node at (4.5,7.1) {$x_1$};
\node at (4.5,2.9) {$x_2$};
\node at (7.5,7.1) {$y_1$};
\node at (7.5,2.9) {$y_2$};
\node at (10,8.7) {$b_1^1$};
\node at (12.4,4.5) {$b_2^1$};
\node at (10,1.3) {$b_3^1$};
\node at (13.5,7.5) {$b_1^2$};
\node at (16.4,3.5) {$b_2^2$};
\node at (14,0.3) {$b_3^2$};
\node at (18.7,8) {$b_1^3$};
\node at (20.7,5) {$b_2^3$};
\node at (18.7,2) {$b_3^3$};
\end{tikzpicture}
\caption{Illustration for the proof of Proposition~\ref{prop:egal-highlow}.}
\label{fig:egal-highlow}
\end{figure}
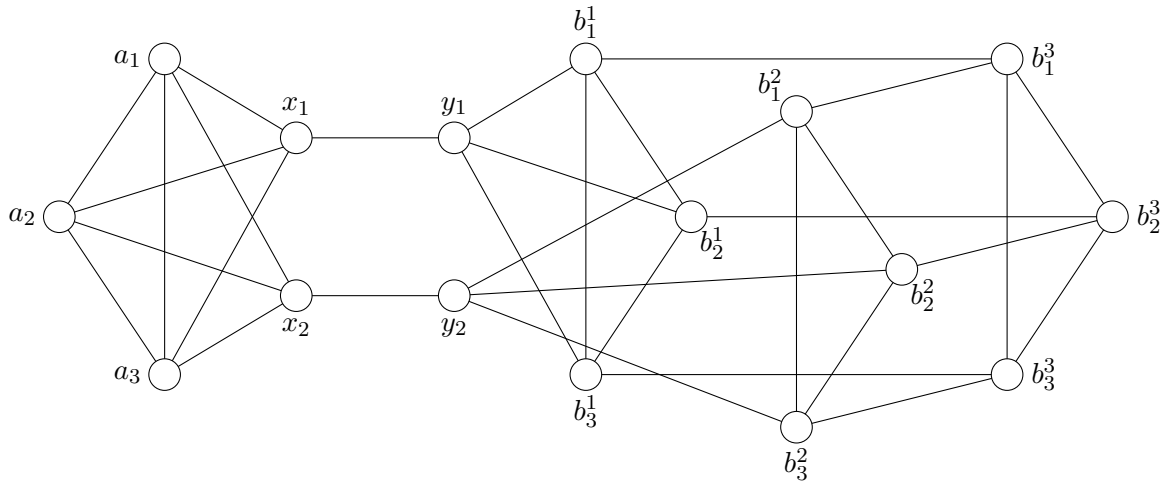

Assume that there are $n = 4k+4$ agents, composed of $k$ blue and $3k + 4$ red agents, so no node can be left empty. 
Note that the topology graph is $(k + 1)$-regular and contains cliques of size $k$. 
Hence, as in the proof of \Cref{thm:SW-hardness}, an assignment maximizes the utilitarian welfare if and only if the blue agents form a clique.

Consider the assignment where the blue agents form the clique $\{a_i\colon 1\le i \le k\}$.
The utility of each neighboring (red) agent $x_j$ is $1/(k+1)$, so the egalitarian welfare is $1/(k+1)$. 
On the other hand, consider the assignment with the blue agents forming the clique $\{b_i^3\colon 1\le i \le k\}$. 
In this assignment, every agent has utility at least $(k-1)/(k+1)$, where the minimum utility is attained by the blue agents. 
Hence, the egalitarian welfare is $(k-1)/(k+1)$, which is a multiplicative factor of $\Theta(k) = \Theta(n)$ higher than that of the first assignment.
\end{proof}

\end{document}